\newcommand{\prob}[1]{\mathbb{P}\left(#1\right)}
\newcommand{\norm}[1]{\left\lVert#1\right\rVert}
\newcommand{\E}[1]{\mathbb{E}\left[#1\right]}
\newcommand{\abs}[1]{\left\lvert #1 \right\rvert}
\newcommand{\s}[2]{\langle #1,#2 \rangle}
\renewcommand{\hat}{\widehat}
\renewcommand{\vec}[1]{\boldsymbol{#1}} 
\newcommand{\R}{\mathbb{R}} 
\newcommand{\N}{\mathbb{N}} 
\newcommand{\ncp}{\kappa} 
\newcommand{\ind}{\mathbb{I}} 
\newcommand{\SR}{\varphi} 
\DeclareMathOperator*{\argmin}{arg\,min}
\newtheorem{Thm}{Theorem}
\newtheorem{Def}{Definition}
\providecommand{\keywords}[1]
{
  \small	
  \textbf{\textbf{Keywords: }} #1
}
\begin{document}
\title{Robust inference of cooperative behaviour of multiple ion channels in voltage-clamp recordings}
\author[1]{Robin Requadt}
\author[2]{Manuel Fink}
\author[2]{Patrick Kubica}
\author[2]{Claudia Steinem}
\author[1]{Axel Munk}
\author[1]{Housen Li\thanks{Email: housen.li@mathematik.uni-goettingen.de}}
\date{\today}
\affil{\small{Institute for Mathematical Stochastics, Georg August University of G\"ottingen, Germany}\protect\\
$^2$\small{Institute of Organic and Biomolecular Chemistry, Georg August University of G\"ottingen, Germany}}
\maketitle

\begin{abstract}
Recent experimental studies have shed light on the intriguing possibility that ion channels exhibit cooperative behaviour. However, a comprehensive understanding of such cooperativity remains elusive, primarily due to limitations in measuring separately the response of each channel. Rather, only the superimposed channel response can be observed, challenging existing data analysis methods. To address this gap, we propose IDC (Idealisation, Discretisation, and Cooperativity inference), a robust statistical data analysis methodology that requires only voltage-clamp current recordings of an ensemble of ion channels. The framework of IDC enables us to integrate recent advancements in idealisation techniques and coupled Markov models. Further, in the cooperativity inference phase of IDC, we introduce a minimum distance estimator and establish its statistical guarantee in the form of asymptotic consistency. We demonstrate the effectiveness and robustness of IDC through extensive simulation studies. As an application, we investigate gramicidin D channels. Our findings reveal that these channels act independently, even at varying applied voltages during voltage-clamp experiments. An implementation of IDC is available from GitLab.
\end{abstract}

\keywords{
Cooperative behaviour, coupled Markov model, independent interaction, minimum distance estimation, robust idealisation, voltage-clamp.}

\section{Introduction}
\label{sec:introduction}
{I}{on} channels and protein pores mediate the transport of ions and nutrients across biological membranes. These transmembrane proteins play a critical role in governing nearly every cellular process \cite{jordan2005fifty}. Notably, they are required for homeostasis and signal transduction, play a pivotal role in neurotransmission, and serve as important drug targets possessing significant potential for the pharmaceutical industry. 

\subsection{Analysis of voltage-clamp current traces}
The voltage-clamp technique \cite{neher1976single} is a long-established method to record ion transport processes through ion channels, which produces current traces with abrupt changes corresponding to channel openings and closings. 
The recovery of the underlying channel profile from the noisy and convoluted voltage-clamp measurement, known as \emph{idealisation}, is a topic of intensive research in the recent literature. The existing idealisation methods can be roughly divided into two groups. 

The first group of methods builds on the commonly believed Markovian dynamics of channel profiles (which remain valid apart for rare exceptions e.g.\ in \cite{fulinski1998non}), and assumes an underlying hidden Markov model \cite{ball1992stochastic}. The built-in parametric assumption of such models allows fast computation \cite{QAT23} as well as efficient statistical inference (via e.g.\ sampling methods \cite{de2001statistical,siekmann2011mcmc},  expectation-maximisation type algorithms \cite{qin2000hidden, ven00iii}, or heuristic approaches  \cite{heinemann1991open, schroeder2015resolve, yellen1984ionic}). However, hidden Markov models cannot incorporate various practically relevant noise models (e.g.\ violet, pink, or heterogeneous noise) as well as baseline fluctuations, both of which are common in ion channel recordings \cite{VenWKS98,ven98ii}. Although certain extensions exist \cite{diehn2019maximum}, this remains a challenge in general. 

The second group of methods is nonparametric in nature, often referred to as model-free methods, as they do not rely on any pre-assumed parametric models. Typical methods are based on thresholding techniques \cite{neher1976single, vandongen1996new} and also on the minimum description length \cite{gnanasambandam2017unsupervised}. Recently, a novel subgroup of automatic idealisation methods (based on multiscale testing on the residuals over a range of time scales and locations, and variational estimation in the form of complexity penalisation) has been established as being flexible in modelling ion channel recordings, and at the same time computationally efficient. Some examples include the jump segmentation multiresolution filter,  JSMURF~\cite{hotz2013idealizing}, its extensions to flickering events (i.e., events on fine time scales below the filter length), JULES~\cite{pein2018fully} and further adaptation to open channel noises, HILDE~\cite{HILDE}. Notably, MUSCLE~\cite{LiuLi24} makes little assumption on the noise component (assuming only the median of noise component centred at zero), and thus it is robust against different types of noises. Besides, MUSCLE maintains a high detection power in locating switches of states, thanks to its focus on local errors rather than global errors.

\subsection{Cluster of ion channels}\label{ss:intro_cluster}

These sophisticated data analysis methods focus almost exclusively on single-channel events. One possible reason might be that ion channels are traditionally viewed as entities acting autonomously. Recent experimental findings that leverage advancements in electrophysiology, structural biology, and imaging techniques have challenged this conventional understanding of ion channels. As revealed by several studies \cite{Molina2015,Moreno2016,clatot2017,Pfeiffer2020,Dixon2022}, certain classes of ion channels appear to operate in a coordinated, cooperative manner. The cooperative behaviour of ion channels is conjectured from the fact that ion channels of the same type can assemble into dense clusters, where they experience physical interactions. 

However, a solid understanding of the cooperative behaviour of multiple channels remains missing, mainly because we can only access the total conductivity of a channel ensemble, rather than the individual conductivity of each channel. This challenges existing idealisation methods, as they do not allow to infer from the total conductivity to that of the individual channels. In more mathematical terms, this problem can be viewed as a specific instance of inferring coupling relations from multiple time series with only access to the superposition of these time series. Despite extensive research on coupling relations in multivariate time series \cite{Dahl00} (where each time series is individually accessible), only two works \cite{CHUNG,vanegas2023} have addressed this issue by employing specific instances of parametric hidden Markov models. In a pioneering work~\cite{CHUNG}, Chung and Kennedy proposed a hidden Markov model (referred to as the Chung--Kennedy model), which is built on a convex combination of a fully coupled interaction and an independent interaction. Despite its great appeal to be simple, it is thus restrictive to model only positive cooperativity (meaning that a channel is likely to open/close given another one is open/closed) or independence (i.e., zero cooperativity).  Therefore, Vanegas et al.~\cite{vanegas2023} introduced a vector norm-dependent hidden Markov model based on the exchangeablity of individual channels and certain conditional independence (see also \cref{VND-model}), which allows for modelling a wide range of interactions including positive, negative and zero cooperativity. Unfortunately, both models do not allow the incorporation of many important noise models and the low-pass filter, which is essential in the voltage-clamp current recording processes.

\subsection{Contribution of this paper}

To overcome this lack of methodology, in this paper we propose IDC, a novel, robust statistical data analysis methodology, which infers the cooperative behaviour of multiple ion channels using only the electrophysiology (voltage-clamp) measurement of the total conductance of ion channels. IDC consists of three steps: (i) \underline{i}dealisation of voltage-clamp current measurements, (ii) \underline{d}iscretisation of idealised states into the numbers of open channels, and then (iii) \underline{c}ooperativity inference, see \cref{f:intro}. This three-step procedure IDC allows us to incorporate the recent advances in idealisation methods \cite{LiuLi24} and in coupled Markov models \cite{vanegas2023}, and meanwhile to infer the cooperative behaviour of multiple channels with statistical guarantee. Unlike the existing approaches (i.e.~\cite{CHUNG,vanegas2023}) that rely on parametric hidden Markov models, IDC not only exhibits robustness to modelling errors, but also handles practical challenges inherent in voltage-clamp measurements (such as baseline fluctuations, low-pass filtering, and various noise types). In essence, IDC can be loosely viewed as a \emph{semiparametric} hidden Markov model with the emission distributions being non-parametric. Besides, IDC has the potential to be applied to general scenarios other than ion channels, where the underlying mechanism can be modelled as a coupled Markov model. The main contribution of this paper can be summarised as follows. 

First, the proposed IDC takes into account the low-pass filter that is intrinsic to voltage-clamp techniques and it makes very little assumption on the noise component, thus allowing heterogeneous, non-Gaussian noises (e.g., pink or violet) and outliers. Besides, IDC is computationally efficient, e.g.\ taking a few minutes for 100k samples on a standard laptop ($\approx 3.5$ minutes for Intel Core i5 processor, 8 GB memory).

Second, we introduce a minimum distance estimator for the inference of the underlying model parameter, which determines the cooperative behaviour of ion channels. We further establish its statistical validity under general mild conditions (\cref{consistency}). This ensures the correctness of inferred cooperativity for sufficiently long ion channel recordings, assuming the (approximate) validity of our statistical model.

Third, our statistical data analysis provides strong evidence that multiple gramicidin channels gate independently, in accordance with the common belief about gramicidin gating. We further performed the same analysis for different applied voltages to validate our results. We found that the cooperativity of gramicidin gating is independent of the applied voltage, even though the dynamics of the channels change with the voltage.  

\begin{figure}[!ht]
\includegraphics[width=\textwidth]{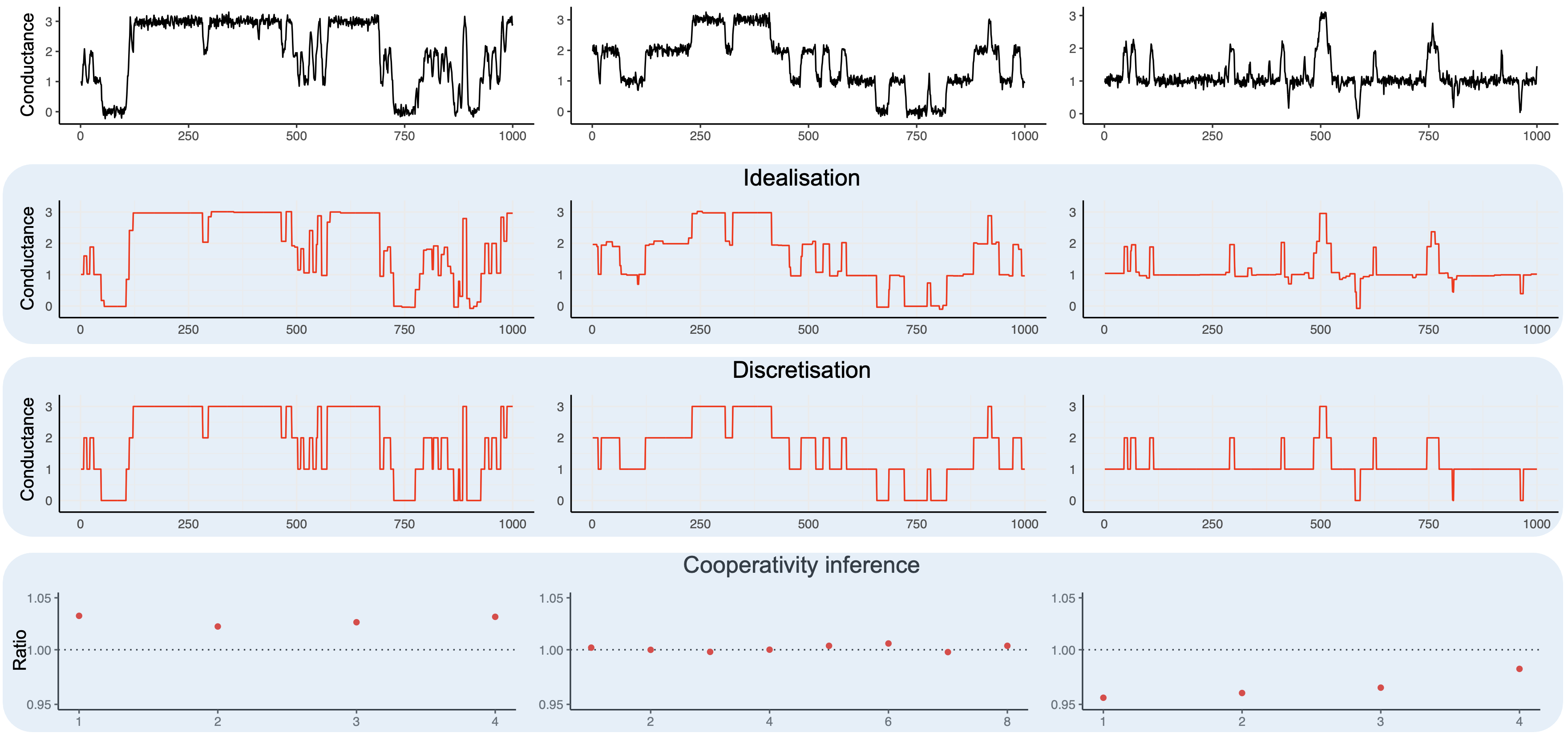}\\
\includegraphics[width=\textwidth]{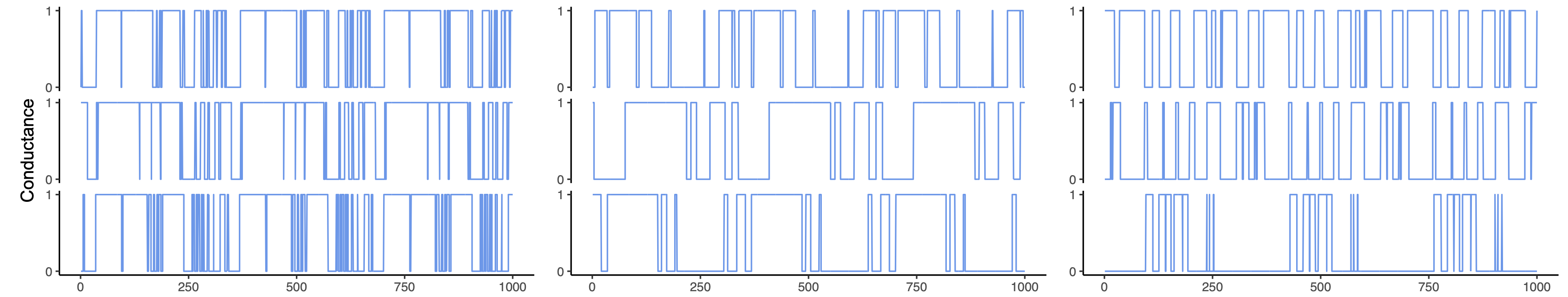}
\caption{Illustration of IDC (\underline{i}dealisation, \underline{d}iscretisation and \underline{c}ooperativity inference) on simulated data. From left to right are three scenarios of positive cooperativity, independence and negative cooperativity, respectively. The top panel plots the simulated ion channel recordings. The middle (second to fourth) panels present the results of IDC for each step. A positive (or negative) cooperativity is inferred if the estimated ratios in the fourth panel are all greater (or smaller) than one. An independent cooperativity is inferred if the estimated ratios are all equal to one. See \cref{behaviour}. Thus, IDC infers correctly the cooperative behaviour in all three scenarios. To shed light on the cooperative behaviour, we plot the \emph{simulated} profiles of individual channels in the bottom. Note that individual channels are inaccessible in practice and cannot be  directly recovered from the data.}
\label{f:intro}
\end{figure}

\subsubsection*{Organisation of the paper} We provide the experimental details in \cref{s:exp} and the statistical models in \cref{s:model}. The proposed IDC methodology is formally introduced in \cref{procedure} together with its statistical guarantee and practical considerations. We further validate IDC through simulation studies in \cref{simul} and then apply it to real data of gramicidin D channels in \cref{s:data}. We conclude the paper with discussions in \cref{s:discuss}.

\section{Experimental setup and measurements}\label{s:exp}
All voltage-clamp current measurements were recorded on black lipid membranes (BLMs) using the Orbit e16 (Nanion Technologies). 1,2-Diphytanoyl-\textit{sn}-glycero-3-phosphocholine (DPhPC) (Avanti Polar Lipids) was dissolved in octane (Sigma-Aldrich) to a concentration of 6 mg/mL. For the formation of BLMs in buffer (500 mM KCl, 10 mM HEPES, pH 7.4) the lipid-solvent solution was painted over the cavities of a MECA-16 chip (Ionera).  Membrane resistances $> 1\text{G}\Omega$ indicated a proper seal of the cavity. Gramicidin D (Sigma-Aldrich) was dissolved in ethanol and diluted to a concentration of 10 µM in water. Reconstitution of gramicidin channels into the BLMs was achieved by adding a small volume of the solution directly to the buffer solution above the membranes. Gramicidin current traces were recorded at fixed voltages. Holding potentials ranging from 100 mV to 50 mV and from $-100$ mV to $-50$ mV, both with increments of 10 mV were used for data collection. When multiple conducting gramicidin channels were observed, the resulting current traces were collected at a sampling rate of 19.53 kHz. All data were filtered by the built-in anti-aliasing filter of the e16 amplifier (Elements) with a kernel $\rho(\cdot)$ whose Fourier transform is $\mathrm{sinc}(\cdot/\SR)^{3}$ with $\SR$ the sampling rate.

\section{Statistical modelling}\label{s:model}

\subsection{Measurement model} \label{ss:meas_model}
In experiments with constant applied voltages (\cref{s:exp}), the conductance of ion channels stays at a fixed level unless an opening or closing event occurs. We thus model it by a piecewise constant function, $f:[0,t_{\max}] \to \R$ defined as
$$
f(t) = \sum_{j = 0}^{\ncp} c_j \ind( \tau_j \le t < \tau_{j+1}).
$$
Here $0 = \tau_0 < \tau_1 < \cdots <\tau_{\ncp} < \tau_{\ncp + 1} = t_{\max} < \infty$ denote the locations at which a gating event (i.e., switch of conductance levels) occurs, $\ncp \in \N$ the  number of gating events,  and $c_0, \ldots, c_{\ncp} \in \R$ the conductance levels. 

As earlier mentioned, the conductance profile of ion channels is measured by voltage-clamp current recordings. In this measurement process, there is an indispensable low-pass filter, which aims at eliminating high-frequency components to avoid aliasing and reducing background noise to improve the signal-to-noise ratio~\cite{filter}. Thus, the ion channel recordings $(y_1, \ldots, y_n)$ can be modelled as 
\begin{equation}\label{ideal}
    y_k=\left(\rho\ast f\right)(t_k)+\varepsilon_k,\quad k=1,\dots,n,
\end{equation}
where $\rho(\cdot)$ is the convolution kernel of the low-pass filter, $\varepsilon_i$ the noise, and $t_k = k/\SR$ the equidistant sampling location with $\SR$ the sampling rate. As a convention, we assume $t_n = t_{\max}$, namely, the full support of $f(\cdot)$ is sampled. The kernel $\rho$ is often known (\cref{s:exp}), and in otherwise case it can be well estimated from the data \cite{Tecu17,Chan22}. The noise $\varepsilon_k = (\rho \ast \mathcal{E})(t_k)$ is a convoluted version of an independent random process $\mathcal{E}(\cdot)$. We make no specific assumptions on the shapes of distributions of $\varepsilon_i$ except that for every $t$ the median of $\mathcal{E}(t)$ is zero. This allows heterogeneous noises (e.g., open channel noises), non-Gaussian and skewed noise components including high-frequent violet noises and long-tailed pink noises. We stress that accordingly the resulting methodology IDC is robust to outliers as well as to baseline fluctuations caused by accidental environmental influences. 

\subsection{Coupled Markov model} \label{VND-model}

We focus on a collection of multiple ion channels of the same type (e.g.\ gramicidin channels). For simplicity, we code the open state of a single channel as \enquote{$1$}, and the closed state as \enquote{$0$}. Let $L \in \N$ be the number of ion channels. Then the conductance profile of multiple ion channels forms a multivariate time series $(\vec{X}_k, k \in \N)$ with
$$
\vec{X}_k = ({X}_{k,1}, \ldots, X_{k,L})\; \in \;\{0,1\}^L
$$
where its $i$-th coordinate $(X_{k,i}, k \in \N)$ corresponds to the $i$-th channel. The individual conductance profile is experimentally inaccessible, and it is only possible to measure the total conductance profile of these $L$ channels, i.e.
\begin{equation}\label{e:sum}
S_k := \norm{\vec{X}_k}_1 = \sum_{i = 1}^L X_{k,i} \;\in\; \{0,1,\ldots,L\}.   
\end{equation}
Note that $\{S_k : k =1, \ldots, n\}$ is a discrete counterpart of $\{f(t_k) : k = 1, \ldots, n\}$, which is measured by voltage-clamp in model~\eqref{ideal},  in the sense that the conductance level $f(t_k)$ corresponds to the number $S_k$  of open channels at time $t_k$. 

The statistical problem is then to infer the cooperative behaviour of channels $(X_{k,i}: k \in \N)$, $i \in \{1,\ldots, L\}$, based on the information of $(S_k: k\in \N)$. We assume that 
$(\vec{X}_k : k \in \N)$ is a multivariate time-homogeneous Markov chain (see e.g.\ \cite{ball1992stochastic}). Further structural assumptions are necessary and possible, e.g.\ exploiting the exchangeability of individual channels. 

We consider particularly the vector norm-dependent Markov model (proposed in \cite{vanegas2023}) that aim at modelling cooperativity of ion channels. It is built on the structural assumptions in the form of {permutation invariance} (i.e., exchangeability) and conditional independence. In the context of multiple ion channels, permutation invariance means that individual channels are of identical importance, in particular, no existence of dominant channels. Conditional independence assumes that the ion channels behave independently of each other, given the states of all ion channels in the previous time step. 

\begin{Def}[Vector norm dependence \cite{vanegas2023}]\label{d:vnd}
A multivariate time-homogeneous Markov chain $(\vec{X}_k, {k\in \mathbb{N}_0 })$ with state space $\{0,1\}^L$ is a \emph{vector norm dependent Markov chain} (VND-MC), if, for any $\vec{x},\vec{z}\in\{0,1\}^L$, $k\in \N$, it satisfies: 
\begin{subequations}
\begin{enumerate}
    \item permutation invariance, i.e.
    \begin{equation}\label{e:perm_inv}
    \prob{\vec{X}_{k+1} = \vec{z} \mid \vec{X}_{k} = \vec{x}} 
    = \prob{\vec{X}_{k+1} = \vec{P}\vec{z} \mid \vec{X}_{k} = \vec{P}\vec{x}},
    \end{equation}
    for every permutation matrix $\vec{P}\in\{0,1\}^{L\times L}$,
    \item and conditional independence, i.e.
    \begin{equation}\label{e:cond_ind}
    \prob{\vec{X}_{k+1}=\vec{z}\mid \vec{X}_k=\vec{x}}=\prod_{i=1}^L \prob{X_{k+1,i}=z_i\mid {\vec{X}_k}={\vec{x}}}.
    \end{equation}
\end{enumerate} 
\end{subequations}
\end{Def}

The transition matrix of a VND-MC can be parameterized by $\vec{\theta}=(\lambda_0,\dots,\lambda_{L-1},\eta_1,\dots,\eta_L)\in [0,1]^{2L}$, with 
\begin{align*}
    \lambda_r & =\prob{X_{k+1, i}=0\mid X_{k,i}=0, \norm{\vec{X}_k}_1 = r}\\
    \text{and } \eta_{r +1} & =\prob{X_{k+1,i}=1\mid X_{k,i}=1, \norm{\vec{X}_k}_1 = r+1}.
\end{align*}
We stress that $\lambda_r,\eta_{r+1}$ are independent of $i \in \{1,\ldots, L\}$ and $k\in\N$, see \cite[Theorem~2.19]{vanegas2023}. As a consequence, the sum process $S_k = \sum_{i=1}^L X_{k,i}$ in~\eqref{e:sum} is also a time-homogeneous Markov chain, consistent with the observed Markovian dynamics in ion channel recordings (see \cref{ss:markov}). Recall that only $S_k$ is practically accessible. Thus, an important issue pertains to the possibility of determining the parameter vector $\vec{\theta}$ in VND-MC $(\vec{X}_k)_{k\in\N}$, from the transition probability matrix $\vec{Q}(\vec{\theta})$ of $S_k = \norm{\vec{X}_k}_1$, $k \in \N$. A prerequisite is that the mapping $\vec{\theta} \mapsto \vec{Q}(\vec{\theta})$ should be injective, in which case we say that VND-MC is \emph{identifiable.} In fact, the VND-MC is always identifiable for $L$ being an odd number and remains to be so for an even $L$ if we further assume $\lambda_{L/2} \ge 1-\eta_{L/2}$, or the opposite inequality, see \cite[Theorem~2.2]{vanegas2023}. 

The VND-MC is able to model a wide range of interactions of ion channels, including positive, negative, and zero cooperativity. They have a natural and direct interpretation of the biological behaviour of the ensemble of ion channels. 
\begin{Def}[Cooperative behaviour \cite{vanegas2023}]\label{behaviour}
   The VND-MC is \emph{positively cooperative}, if for all $r \in \{1,\dots,L-1\}$ 
\begin{subequations}
\begin{equation}\label{eq1}
    \frac{\lambda_0}{\lambda_r} >1\quad \text{and}\quad \frac{\eta_L}{\eta_r}>1 
\end{equation}
and \emph{negatively cooperative}, if for all $r \in \{1,\dots,L-1\}$
\begin{equation}\label{eq2}
   \frac{\lambda_0}{\lambda_r} < 1\quad \text{and}\quad \frac{\eta_{r+1}}{\eta_1} < 1.
\end{equation}
\end{subequations}
The VND-MC is \emph{zero cooperative} (or independent) if all such ratios in \eqref{eq1}--\eqref{eq2} are equal to 1. 
\end{Def} 

The individual coordinates of VND-MC are conditionally independent, see~\eqref{e:cond_ind}. The dependence between individual coordinates is through the state of the previous time, which introduces a temporal delay in their interactions. This is plausible for ion channels, where the interaction possibly occurs through physicochemical forces. In addition, as mentioned in \cref{ss:intro_cluster}, the range of cooperative behaviour among coordinates in the VND-MC is way wider than in the Chung--Kennedy model \cite{CHUNG}. Therefore, we decided to perform the inference of cooperative behaviour under the VND-MC model.  We stress, however, that our methodology IDC, in general, can be customised with other coupling models (including the Chung--Kennedy model) that incorporate different dependence structures.

\section{Methodology}\label{procedure}

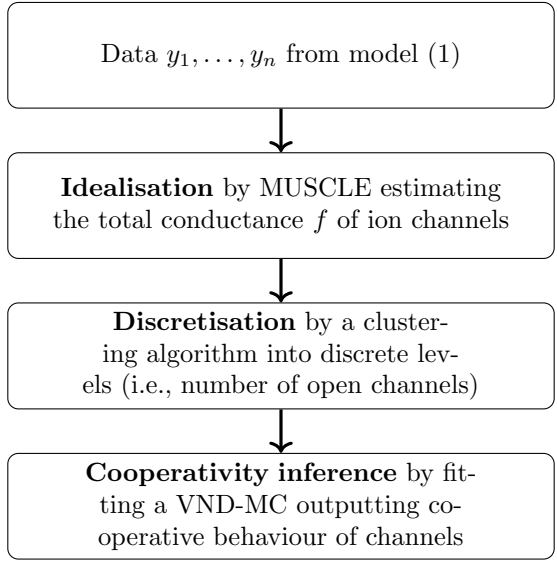
\begin{figure}
    \centering
    \tikzstyle{block} = [rectangle, draw, text width=20em,
    text centered, rounded corners, minimum height=4em]
\begin{tikzpicture}
 [node distance=2cm,
 start chain=going below,]
\node (n1) at (0,0) [block]  {Data $y_1,\dots,y_n$ from model~\eqref{ideal}};
\node (n2) [block, below of=n1] {\textbf{Idealisation} by MUSCLE estimating the total conductance $f$ of ion channels};
\node (n3) [block, below of=n2] {\textbf{Discretisation} by a clustering algorithm into discrete levels (i.e., number of open channels)};
\node (n4) [block, below of=n3] {\textbf{Cooperativity inference} by fitting a VND-MC outputting cooperative behaviour of channels};
\draw [->, very thick] (n1) -- (n2);
\draw [->, very thick] (n2) -- (n3);
\draw [->, very thick] (n3) -- (n4);
\end{tikzpicture}
\caption{Workflow of the proposed IDC procedure.}\label{IDC-procedure}
\end{figure}

We introduce a novel methodology (IDC) for the analysis of the interaction behaviour among multiple ion channels based only on electrophysiology measurements, see \cref{IDC-procedure}. It consists of three steps. First, we idealise the ion channel recordings by a recently proposed model-free method MUSCLE (\emph{\underline{m}ultiscale q\underline{u}antile \underline{s}egmentation \underline{c}ontrolling \underline{l}ocal
\underline{e}rror;} \cite{LiuLi24}). Second, the idealised states are grouped into the discrete levels $\{0,\dots,L\}$, where a level of $\ell$ corresponds to $\ell$ channels at open states. 
Third, based on idealised and discreted states, we infer the interaction behaviour of ion channels by fitting a VND-MC (\cref{d:vnd}). 

\subsection{Idealisation}

As a first step, we pre-process the ion channel recordings in~\eqref{ideal} by idealising the total conductance profile $f$, to separate the inference of cooperativity from the deconvolution problem. Recall that we make as little assumption as possible on the noise component in \cref{ss:meas_model} for the sake of permitting various types of noises, outliers, and baseline fluctuations. Most of the existing idealisation methods (e.g.\ JSMURF, JULES, and HILDE) rely on the assumption of Gaussian noise. Among the few exceptions is MUSCLE \cite[named D-MUSCLE therein]{LiuLi24}, which was developed under the same model assumption as model~\eqref{ideal}, and shows an attractive empirical performance. We thus employ it for this idealisation step. 

MUSCLE conducts statistical tests on the existence of events of state switching on a large collection of time windows at different locations and scales, and combines these multiple tests by controlling local errors defined on each individual constant segment of state sequences.  In the acceptance region of such multiple tests, MUSCLE selects the simplest candidate with the minimum number of state switchings as the final solution. Formally, MUSCLE is defined as a global optimal solution to a non-convex optimisation problem, i.e.,
\begin{align*}
\hat f  \;\in\; & \argmin_{g: [0,t_{n}] \to \R} \text{ number of state switches of } g\\
\text{subject to }\; & 
\underline{q}(\alpha, j-i)\le \sum_{k = i}^{j-1}\mathbf{1}_{\{y_k < (\rho\ast g)(t_k)\}} \le \overline{q}(\alpha, j-i)    \\
& \text{for all $(i,j)$ s.t.\ $g(\cdot)$ is constant on $[t_i - \varpi, t_j)$}
\end{align*}
where $\alpha \in (0,1)$ is the only tuning parameter, $\mathbf{1}_A$ the indicator function on set $A$, $\varpi$ the length of the support of kernel $\rho(\cdot)$, and $\underline{q}(\cdot,\cdot),\overline{q}(\cdot,\cdot)$ the explicitly known simple functions. 
This optimisation problem can be solved efficiently by a dynamic programming algorithm incorporating pruning steps together with a data structure of Wavelet Tree~\cite{Nav14}. 

It is well conceived that the control of local errors instead of global errors will increase the detection power in recovering events of state switches, and also the chance of including false positives. However, due to the minimisation of the number of state switches, MUSCLE enjoys the following finite sample guarantee on false positives
$$
\E{\frac{\max\{\hat\ncp - \ncp, 0\}}{\max\{\hat\ncp, 1\}}} \le \alpha,
$$
where $\hat\ncp$ is the number of state switches of $\hat f$. It provides practical guidance and statistical interpretation in selecting $\alpha$ for MUSCLE, which is a unique feature of this method. Asymptotical minimax optimality of MUSCLE in recovering $f$ is also established. See~\cite{{LiuLi24}} for further details. 

\subsection{Discretisation}\label{ss:discrete}
The total conductance levels of ion channels idealised by MUSCLE exhibit a group pattern (see e.g.\ \cref{fig3}), where a group represents approximately a certain number of open channels, i.e., $\{S_k\}$ in~\eqref{e:sum}. Thus motivated, we group the idealised conductance levels into a finite collection of discrete levels.  That is, we aim to divide $\mathcal{C} := \{\hat f (t_k), k = 1, \ldots, n \}$ into disjoint groups $\mathcal{C}_i$, $i = 0, \ldots, L$ such that $\bigsqcup_{i = 0}^L \mathcal{C}_i = \mathcal{C}$, with $\bigsqcup$ denoting the disjoint union. This can be naturally seen as the clustering problem, defined as follows 
\begin{align*}
&\min_{(\mathcal{C}_0,\dots,\mathcal{C}_L):{\bigsqcup}_{i=0}^L\mathcal{C}_i=\mathcal{C}}  \; \sum_{i = 0}^L \sum_{x \in \mathcal{C}_i} |x - \mu_i|^2 \\
\text{subject} &  \text{ to } \quad \mu_i - \mu_{i-1} \equiv \text{constant}, i = 1, \ldots, L
\end{align*}
where the constraint reflects the fact that individual ion channels are identical in their conductance characteristic. This clustering problem is one-dimensional and can be solved by a dynamic programming algorithm (cf.~\cite{WaSo11}). For every $i \in \{0, \ldots, L\}$, we code the group $\mathcal{C}_i$ simply by $i$ indicating the number of open channels, and refer it to as a \emph{discretised} conductance level. 

Recall the notation in \eqref{ideal}. It can be shown that $$
\mathbb{E}_f\left[\frac{1}{n}\sum_{k = 1}^n \bigl|f(t_k) - \hat f (t_k)\bigr|^2\right] \le C\frac{\ncp\log(n)}{n}, 
$$
for some constant $C$, under mild conditions (cf.\ \cite{LiGM19}). Since $f(\cdot)$ is assumed to follow a Markovian dynamics (i.e., a continuous time Markov chain), it holds that $\ncp = O_{\mathbb{P}}(n/\varphi)$, where the subscript $\mathbb{P}$ indicates the relation holds in probability. Then, $\mathbb{E}_f\bigl[\sum_{k = 1}^n \bigl|f(t_k) - \hat f (t_k)\bigr|^2/n\bigr]  = O_{\mathbb{P}}\bigl(\log(n)/\varphi\bigr)$, which converges to $0$ if $\log(n)/\varphi \to 0$, i.e., the sampling spacing vanishes at a rate faster than $1/\log(n)$ as $n\to \infty$. Thus, we may assume that MUSCLE together with the discretisation described above recovers well the sum process $(S_k, k = 1, \ldots, n)$ defined in \cref{VND-model}, and ignore the vanishing error in later inference on cooperative behaviour. 

\subsection{Fitting the VND-MC model}

\subsubsection{Minimum distance estimation}	
We will estimate the parameter vector $\vec{\theta}$ in the VND-MC based on the sum process $(S_k)_{k = 1}^n$ and then infer the cooperative behaviour by \cref{behaviour}. In~\cite{vanegas2023}, a customised Baum--Welch algorithm was proposed for vector norm dependent \emph{hidden} Markov models. However, there is no existing method for directly estimating the parameters of VND-MC. We thus introduce a minimum distance estimator. The intuition behind this is to find an estimator such that the corresponding estimated VND-MC is as \enquote{close} as possible to the underlying VND-MC. We measure the \enquote{closeness} in terms of a distance of the transition probabilities of the sum process $(S_k)_{k\in\N}$, and use the empirical transition frequencies of $(S_k)_{k = 1}^n$ as an estimator for the true transition probability matrix. Denote the transition probability matrix of $(S_k)_{k\in \mathbb{N}}$ by
$$
\vec{Q}(\vec{\theta}) = \bigl(q_{i,j}(\vec{\theta})\bigr)_{i,j=0}^L=\bigl(\mathbb{P}(S_{k+1}=j\mid S_k=i)\bigr)_{i,j=0}^L,
$$
and its empirical counterpart, i.e., the matrix of transition frequencies of $(S_k)_{k = 1}^n$, by $\hat{\vec{Q}}=(\hat{q}_{i,j})_{i,j=0}^L$. Thus, we define the \emph{minimum distance estimator} as 
\begin{equation}\label{e:mde}
\hat{\vec{\theta}}\;\;\in\;\; \argmin_{\vec{\theta}\in [0,1]^{2L}}\sum_{i=0}^L\sum_{j=0}^L\bigl(q_{i,j}(\vec{\theta})-\hat{q}_{i,j}\bigr)^2.
\end{equation}
The dependence of $q_{i,j}(\vec{\theta})$ on the parameter vector $\vec{\theta}=(\lambda_0,\dots,\lambda_{L-1},\eta_1,\dots,\eta_L)$ is known explicitly, given by 
$$ 
q_{i,j}(\vec{\theta})=
 \sum_{r=0}^{i} C_{i,j,r}\, \eta_i^{i-r}(1-\eta_i)^{r}\lambda_i^{L-j-r}(1-\lambda_i)^{j-i+r},
$$
where $C_{i,j,r} = \binom{i}{r}\binom{L-i}{j-i+r}$ if $j + r \ge i$ and $L\ge j+r $, and equals zero otherwise, see \cite[Proposition 2.21]{vanegas2023}.

In general, the optimisation problem~\eqref{e:mde} can be computationally difficult, as the objective function is not necessarily a convex function, and thus multiple local minima can exist. The initial values of the used optimisation algorithm are therefore of great importance for the algorithm to converge to a global minimum instead of a local minimum. If a proper initial value (discussed later in \cref{ss:practice}) is chosen, the optimisation problem can be tackled by standard optimisation methods, see e.g.\ \cite{NoceWrig06} for an overview. We employ an adaptive barrier method in the setting of linearly constrained optimisation (cf.\ \cite[Page~185ff]{optim}).

\subsubsection{Statistical guarantee}
A reasonable and mild requirement on $\hat{q}_{i,j}$ is that $\hat{q}_{i,j} \stackrel{\mathbb{P}}{\to}q_{i,j}(\vec{\theta}_*)$, as $n\to \infty$, where $\vec{\theta}_*$ is the true parameter vector of VND-MC, and $\stackrel{\mathbb{P}}{\to}$ denotes the convergence in probability. It is known (see e.g.\ \cite[page 56]{norris_1997}) that the empirical frequencies $\hat{q}_{i,j} \to q_{i,j}(\vec{\theta}_*)$ almost surely if the Markov chain $(S_k)_{k \in \mathbb{N}}$ is irreducible, which is the case in our ion channel setup. From this, we deduce a consistency guarantee for the proposed minimum distance estimator.

\begin{Thm}[Consistency]\label{consistency}
Assume that as $n\to \infty$,
\[
\hat{q}_{i,j}\stackrel{\mathbb{P} }{\to}q_{i,j}(\vec{\theta}_*)\quad \text{for } i,j = 0,\ldots, L.
\]
If the VND-MC is identifiable (\cref{VND-model}), then 
\[
\hat{\vec{\theta}}\stackrel{\mathbb{P}}{\to} \vec{\theta}_*,\quad\text{as }n\to \infty.
\]
\end{Thm}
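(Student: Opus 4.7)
The plan is to view \eqref{e:mde} as a standard extremum (M-estimator) problem and apply the classical argmin consistency theorem (e.g.\ van der Vaart, \emph{Asymptotic Statistics}, Thm.~5.7). The ingredients required are: a compact parameter space, uniform convergence in probability of the sample criterion to a deterministic population criterion, and a well-separated minimum of the population criterion at $\vec{\theta}_*$.

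First I would introduce the population criterion
\[
M(\vec{\theta}) \;=\; \sum_{i=0}^L\sum_{j=0}^L \bigl(q_{i,j}(\vec{\theta})-q_{i,j}(\vec{\theta}_*)\bigr)^2
\]
and its empirical counterpart $M_n(\vec{\theta})=\sum_{i,j}(q_{i,j}(\vec{\theta})-\hat q_{i,j})^2$, minimised in \eqref{e:mde}. The parameter space $\Theta\subseteq[0,1]^{2L}$ (intersected with the closed half-space $\lambda_{L/2}\ge 1-\eta_{L/2}$ or its opposite when $L$ is even, so that identifiability applies) is compact. Identifiability means $\vec{\theta}\mapsto \vec{Q}(\vec{\theta})$ is injective on $\Theta$, and therefore $M(\vec{\theta})=0$ if and only if $\vec{\theta}=\vec{\theta}_*$; hence $\vec{\theta}_*$ is the unique minimiser of $M$ on $\Theta$.

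Next I would establish uniform convergence $\sup_{\vec{\theta}\in\Theta}|M_n(\vec{\theta})-M(\vec{\theta})|\stackrel{\mathbb{P}}{\to}0$. Expanding the squares gives
\[
M_n(\vec{\theta})-M(\vec{\theta})\;=\;\sum_{i,j}\Bigl[2q_{i,j}(\vec{\theta})\bigl(q_{i,j}(\vec{\theta}_*)-\hat q_{i,j}\bigr)+\hat q_{i,j}^2-q_{i,j}(\vec{\theta}_*)^2\Bigr].
\]
Each $q_{i,j}(\vec{\theta})$ is a polynomial in $\vec{\theta}$ and is therefore bounded on $\Theta$ by some constant $B<\infty$. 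Since by assumption $\hat q_{i,j}\stackrel{\mathbb{P}}{\to}q_{i,j}(\vec{\theta}_*)$ for all $i,j$, the supremum over $\vec{\theta}$ on the right-hand side is bounded by a finite linear combination of $|\hat q_{i,j}-q_{i,j}(\vec{\theta}_*)|$ and $|\hat q_{i,j}^2-q_{i,j}(\vec{\theta}_*)^2|$, all of which vanish in probability. Uniform convergence thus follows with no recourse to entropy arguments, because $\Theta$ appears only through the deterministic bounded functions $q_{i,j}(\vec{\theta})$.

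Finally I would verify well-separation of $\vec{\theta}_*$: since $M$ is continuous (polynomial in $\vec{\theta}$), $\Theta$ is compact and $\vec{\theta}_*$ is the unique zero, for every $\varepsilon>0$,
\[
\delta(\varepsilon)\;:=\;\inf_{\vec{\theta}\in\Theta,\;\|\vec{\theta}-\vec{\theta}_*\|\ge\varepsilon} M(\vec{\theta})\;>\;0.
\]
Combining this with uniform convergence and the inequality $M(\hat{\vec{\theta}})\le M_n(\hat{\vec{\theta}})+o_{\mathbb{P}}(1)\le M_n(\vec{\theta}_*)+o_{\mathbb{P}}(1)= o_{\mathbb{P}}(1)$ yields $\mathbb{P}(\|\hat{\vec{\theta}}-\vec{\theta}_*\|\ge\varepsilon)\le \mathbb{P}(M(\hat{\vec{\theta}})\ge \delta(\varepsilon))\to 0$, which is the desired consistency. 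The only delicate point, and what I would flag as the main obstacle, is ensuring that the feasible set over which \eqref{e:mde} is solved is precisely the (compact) region on which the identifiability result of \cite{vanegas2023} applies; once that is settled, the remaining steps are standard and the polynomial structure of $q_{i,j}(\vec{\theta})$ makes uniform convergence essentially free.
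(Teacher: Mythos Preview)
Your proposal is correct and follows essentially the same route as the paper: both verify the hypotheses of van der Vaart's Theorem~5.7 by (i) deducing well-separation of $\vec{\theta}_*$ from compactness of $[0,1]^{2L}$, continuity of $\vec{\theta}\mapsto\vec{Q}(\vec{\theta})$ and identifiability, and (ii) obtaining uniform convergence of $M_n-M$ by expanding the squares and using that the $q_{i,j}(\vec{\theta})$ are uniformly bounded. The only cosmetic difference is that the paper phrases step~(ii) in Frobenius-norm notation and applies Cauchy--Schwarz with the explicit bound $\sup_{\vec{\theta}}\norm{\vec{Q}(\vec{\theta})}_{\mathrm{F}}\le L+1$, whereas you work componentwise with a generic bound~$B$; your remark about restricting the feasible set when $L$ is even is a valid caveat that the paper leaves implicit in the identifiability assumption.
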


\begin{proof}Let $\norm{\cdot}_{\mathrm{F}}$ denote the Frobenius norm and $\s{\cdot}{\cdot}_{\mathrm{F}}$ the corresponding Frobenius inner product. Then, we have
  \[
  \norm{\vec{Q}(\vec{\theta})-\hat{\vec{Q}}}^2_{\mathrm{F}}= \sum_{i=0}^L\sum_{j=0}^L\bigl(q_{i,j}(\vec{\theta})-\hat{q}_{i,j}\bigr)^2.
  \]
      Note that $[0,1]^{2L}$ is compact and the entries of $\vec{Q}(\vec{\theta})$ are continuous functions of $\vec{\theta}$. This combined with the injectivity (due to the identifiability assumption) of the mapping $\vec{\theta}\mapsto \vec{Q}(\vec{\theta})$ implies that for any $\varepsilon>0$ 
      \[
      \inf_{\vec{\theta}\in [0,1]^{2L}:\norm{\vec{\theta}-\vec{\theta}_*}\ge \varepsilon}\norm{\vec{Q}(\vec{\theta})-\vec{Q}(\vec{\theta}_*)}_{\mathrm{F}}> 0.
      \] 
      We will show 
      \begin{equation}
      \sup_{\vec{\theta}\in [0,1]^{2L}}\abs{\norm{\vec{Q}(\vec{\theta})-\hat{\vec{Q}}}^2_{\mathrm{F}}-\norm{\vec{Q}(\vec{\theta})-\vec{Q}(\vec{\theta}_*)}^2_{\mathrm{F}}} \stackrel{\mathbb{P} }{\to}0. \label{uniform_conv}
      \end{equation} 
      To this end, we have by the Cauchy--Schwarz inequality
      \begin{align*}
       &\abs{\norm{\vec{Q}(\vec{\theta})-\hat{\vec{Q}}}^2_{\mathrm{F}}-\norm{\vec{Q}(\vec{\theta})-\vec{Q}(\vec{\theta}_*)}^2_{\mathrm{F}}}\\
       =&\abs{\norm{\hat{\vec{Q}} }_{\mathrm{F}}^2-\norm{\vec{Q}(\vec{\theta}_*) }_{\mathrm{F}}^2+2\s{\vec{Q}(\vec{\theta})}{\vec{Q}(\vec{\theta}_*)-\hat{\vec{Q}} }_{\mathrm{F}}}\\
    \le& \abs{\norm{\hat{\vec{Q}} }_{\mathrm{F}}^2-\norm{\vec{Q}(\vec{\theta}_*) }_{\mathrm{F}}^2}+2(L+1) \norm{\hat{\vec{Q}}-\vec{Q}(\vec{\theta}_*)}_{\mathrm{F}},
      \end{align*}
      where we used $\sup_{\vec{\theta}\in [0,1]^{2L}}\norm{\vec{Q}(\vec{\theta})}_{\mathrm{F}}\le L + 1$. The  continuous mapping theorem and the fact that for all $i,j=0,\dots,L,$ 
      \[
      \hat{q}_{i,j}\stackrel{\mathbb{P}}{\to}q_{i,j}(\vec{\theta}_*)
      \]
      imply that
      \[
      \norm{\hat{\vec{Q}} }_{\mathrm{F}}^2-\norm{\vec{Q}(\vec{\theta}_*) }_{\mathrm{F}}^2 \stackrel{\mathbb{P}}{\to}0\; \text{ and }\; \norm{\hat{\vec{Q}}-\vec{Q}(\vec{\theta}_*)}_{\mathrm{F}}\stackrel{\mathbb{P}}{\to} 0.
      \]
It yields (\ref{uniform_conv}) and by \cite[Theorem 5.7]{v} finishes the proof. 
\end{proof}

We stress that, in general, minimum distance estimators are favoured due to their robustness compared to maximum likelihood estimators, see \cite{MDE1,MDE2}.

\subsection{Practical issue}\label{ss:practice}

\subsubsection{Choice of $L$} \label{choose_L}
In practice, the number $L$ of ion channels is unknown and needs to be determined from the data. This is a difficult model selection problem since various information criteria, as well as cross-validation procedures, do not perform well, as empirically demonstrated in \cite{vanegas2023}. Following the suggestion in \cite{vanegas2023}, we select $L$ by simply counting the distinct groups in idealised conductance levels. Of course, the selected $L$ may likely be smaller than the true number of channels, as not every channel would open at the same time. This is even more prevalent in case the ion channels behave negatively cooperatively, see \cref{simul}. However, the qualitative interaction behaviour (\cref{behaviour}) of a VND-MC remains stable over a range of choices of $L$, see  \cref{number_channels,data_analysis} and also \cite{vanegas2023}. 
 
\subsubsection{Choice of initial values} \label{initialization} 
The number $L$ of channels is usually not large, meaning the parameter space $[0,1]^{2L}$ is not very high dimensional. Thus, in the computation of the minimum distance estimator, we choose the initial values based on an exhaustive grid search approach. More precisely, we consider a finite set $G\subseteq [0,1]^{2L}$ and choose the initial vector $\vec{\theta}_{\text{init}}$ as the one in $G$ that gives the smallest objective value in \eqref{e:mde}. A finer grid $G$ increases the chance to find the initial values that lead to a global optimum of \eqref{e:mde} but comes at the cost of a larger computation time. In all our data analysis (including simulation) we used $G=\{0.1,0.2,\dots,0.9\}^{2L}$ since further refined grids did not lead to improvements to solutions of the optimisation problem \eqref{e:mde}. 

\subsection{Software}\label{ss:practice}
The implementation of IDC in {\it R} language, together with the {\it R} codes for reproducing the simulation study and real data analysis in this paper, is available from GitLab (\url{https://gitlab.gwdg.de/requadt/idc}). The three steps of IDC are implemented separately in a modular fashion. Hence, it allows for instance to employ a different idealisation method or a different coupling model in cooperativity inference.  

The first idealisation step is performed via the {\it R} package \texttt{MUSCLE}, available from GitHub (\url{https://github.com/liuzhi1993/muscle}). The discretisation in the second step is by a customised dynamic programming algorithm that adapts to the constraint on conductance levels (cf.\ \cref{ss:discrete}). In the third step of cooperativity inference, we compute the minimum distance estimator for the VND parameters using the {\it R} function \texttt{constrOptim}, available in the pre-installed {\it R} package \texttt{stats}, together with an exhaustive search of initial values (cf.\ \cref{initialization}). Finally, we calculate the ratios stated in \cref{behaviour} and conclude the cooperativity inference.

\section{Simulation study}\label{simul}
In this section, we examine the empirical performance of the proposed IDC methodology in various simulation setups that pertain to ion channel recordings.

\subsection{Comparison with VND-HMM} \label{vnd_hmm}
We consider three different scenarios, corresponding to zero, positively and negatively cooperative behaviours of ion channels. To benchmark the performance, we compare the proposed IDC to the hidden Markov model approach introduced in \cite{vanegas2023}, denoted by HMM, with slight abuse of abbreviation. 

For a fixed parameter vector $\vec{\theta}$, we simulate a VND-MC of length 1,200, with $L = 2$ channels.  Based on it, we create three scenarios by adding different types of noise: (i) Gaussian white noise with standard deviation $0.1$, (ii) Cauchy noise with location parameter $0$ and scale parameter $0.05$, and (iii) the mixture of these two with weights $0.85$ (Gaussian component) and $0.15$ (Cauchy component). The Cauchy distribution is employed to emulate outliers, which are commonly observed in real data applications. Finally, we simulate the data sets according to model~\eqref{ideal} by incorporating a low-pass filter of Bessel type. 

\Cref{simulation_1,simulation_2,simulation_3} show the comparison results in zero, positively and negatively cooperative scenarios, respectively. The proposed IDC outperforms the HMM across all three scenarios when the added noise is Cauchy distributed or follows the stated mixture distribution. In the case of Gaussian noise, both approaches have a similar performance. We can also see that IDC performs nearly the same for the three scenarios, indicating its robustness to noise distributions.  

\begin{figure}[!t]
\centerline{\includegraphics[width=.6\textwidth]{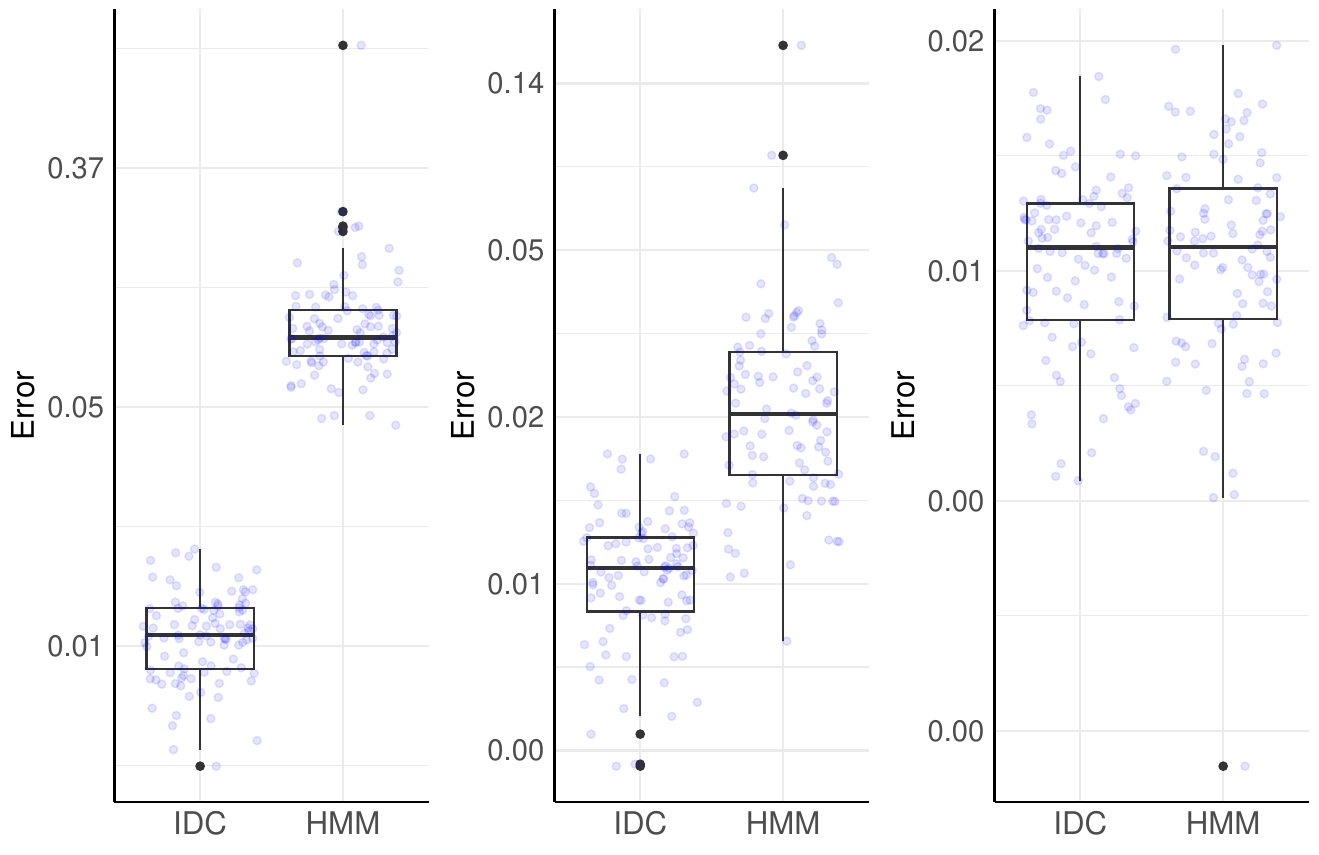}}
\caption{Comparison of the proposed IDC and the HMM \cite{vanegas2023} on two zero cooperative channels. Three panels differ only in noise distributions: left is $\mathrm{Cauchy}(0.05)$, middle the mixture distribution $0.85\mathcal{N}(0,0.1^2)+0.15\mathrm{Cauchy}(0.05)$, and right  $\mathcal{N}(0,0.1^2)$. In each panel, the $\ell_2$ errors in the estimation of the true parameter vector $(\lambda_0,\lambda_1,\eta_1,\eta_2)=(0.99,0.99,0.99,0.99)$, over $100$ repetitions, are summarised as box plots, with the error of each repetition shown by jittered purple dots. The vertical axis is on the log scale. }
\label{simulation_1}
\end{figure}

\begin{figure}[!t]
\centerline{\includegraphics[width=.6\textwidth]{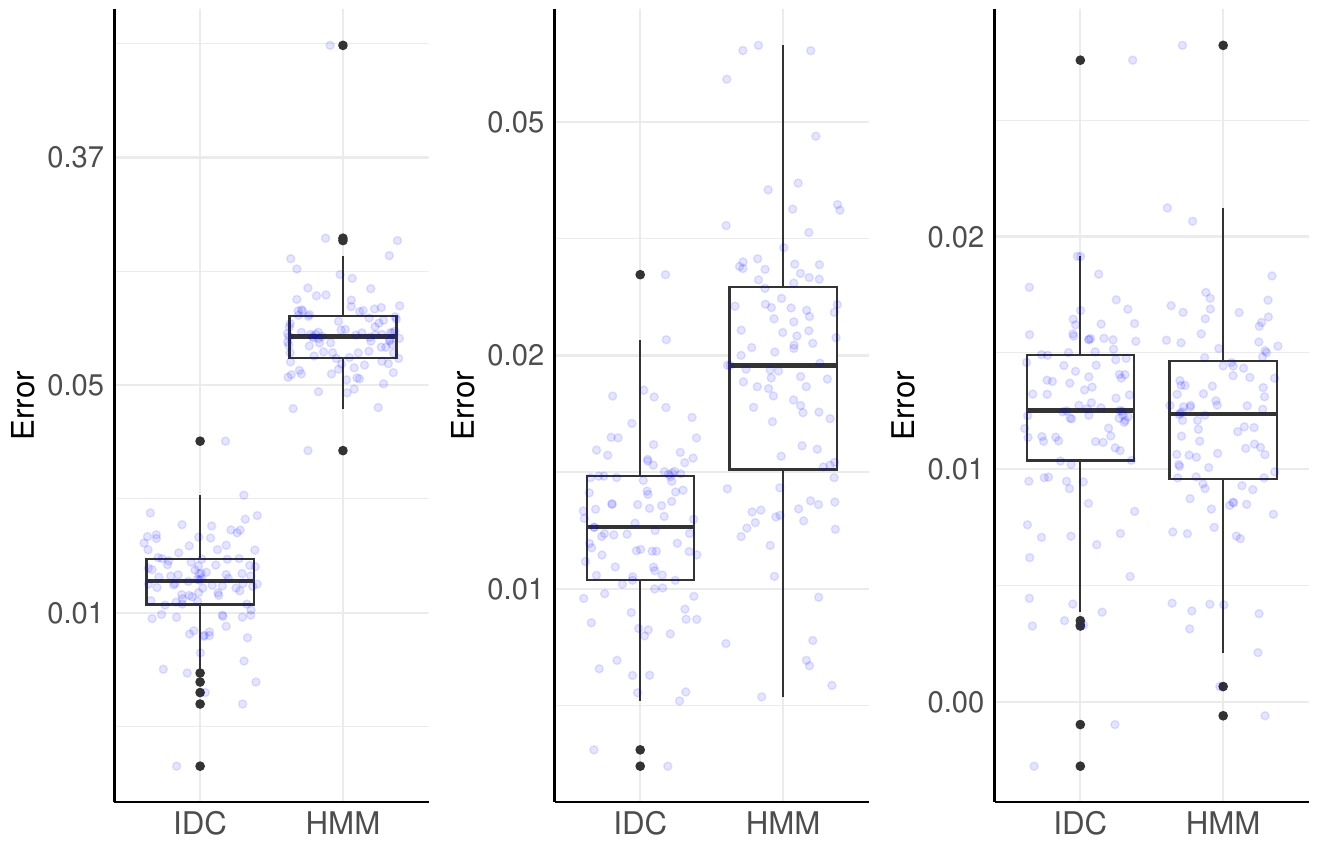}}
\caption{Comparison of the proposed IDC and the HMM \cite{vanegas2023} on two positively cooperative channels. Three panels differ only in noise distributions: left is $\mathrm{Cauchy}(0.05)$, middle the mixture distribution $0.85\mathcal{N}(0,0.1^2)+0.15\mathrm{Cauchy}(0.05)$, and right  $\mathcal{N}(0,0.1^2)$. In each panel, the $\ell_2$ errors in the estimation of the true parameter vector $(\lambda_0,\lambda_1,\eta_1,\eta_2)=(0.99,0.985,0.985,0.99)$, over $100$ repetitions, are summarised as box plots, with the error of each repetition shown by jittered purple dots. The vertical axis is on the log scale. }
\label{simulation_2}
\end{figure}

\begin{figure}[!t]
\centerline{\includegraphics[width=.6\textwidth]{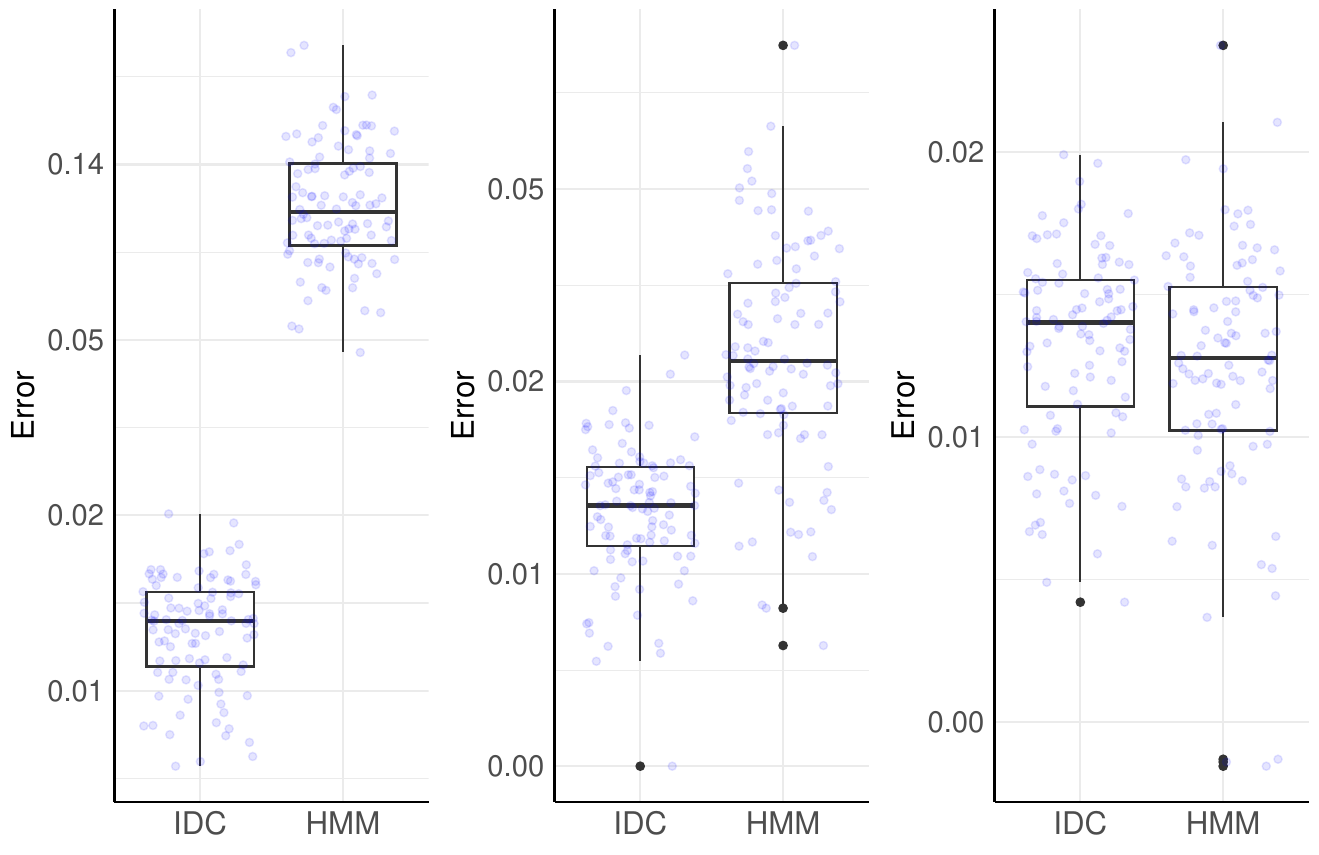}}
\caption{Comparison of the proposed IDC and the HMM \cite{vanegas2023} on two negatively cooperative channels. Three panels differ only in noise distributions: left is $\mathrm{Cauchy}(0.05)$, middle the mixture distribution $0.85\mathcal{N}(0,0.1^2)+0.15\mathrm{Cauchy}(0.05)$, and right  $\mathcal{N}(0,0.1^2)$. In each panel, the $\ell_2$ errors in the estimation of the true parameter vector $(\lambda_0,\lambda_1,\eta_1,\eta_2)=(0.985,0.99,0.99,0.985)$, over $100$ repetitions, are summarised as box plots, with the error of each repetition shown by jittered purple dots. The vertical axis is on the log scale. }
\label{simulation_3}
\end{figure}

\subsection{Influence of $L$} \label{number_channels}
The goal is to investigate the performance of IDC when the number of channels is wrongly specified. We set $L = 20$, and simulate three VND-MCs of length 100,000 with the parameter vector $\vec{\theta}\in\R^{2L} \equiv \R^{40}$ taking values of 
\begin{align*}
 &(0.99,\dots,0.99) && \text{for zero,}\\   
 &(0.99,0.98,0.98,\dots,0.98,0.98,0.99)&&\text{for positively,}\\
 &(\underbrace{0.98,0.99,\dots,0.99}_{\text{20 entries}},\underbrace{0.99,0.98\dots,0.98}_{\text{20 entries}}) && \text{for negatively}
\end{align*}
cooperative behaviours, respectively. 

The estimated values of $L$ are shown in \cref{simulation_L}. In the negatively cooperative scenario, the estimated number of channels is on average smaller than in the zero and positively cooperative scenarios. In the latter two scenarios, the estimated number of channels is relatively close to the true number of channels, though there is often an underestimation by no more than 3 channels. 

\begin{figure}[!t]
\centerline{\includegraphics[width=.6\textwidth]{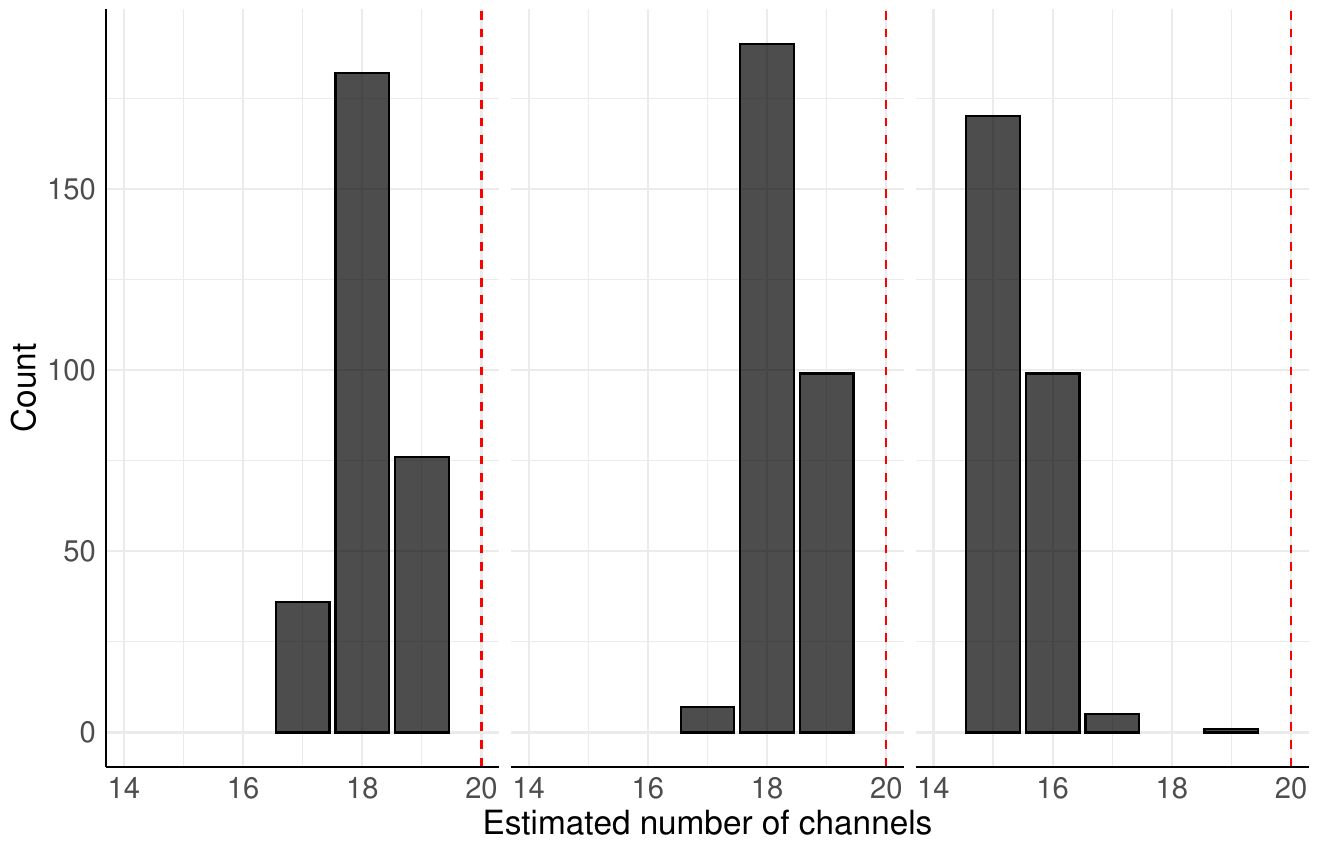}}
\caption{Histogram of the estimated number of channels over $300$ repetitions. From left to right panels are zero, positively and negatively cooperative scenarios, respectively. The true value of $L=20$ is marked by vertical dashed lines.}
\label{simulation_L}
\end{figure}

To see the influence of underestimated $L$ on the inference of cooperative behaviour, we show the estimated values of the ratios, introduced in \cref{behaviour}, in \cref{simulation_ratios}. In the scenario with zero cooperativity, the estimated ratios are close to one on average, and in the scenario with positive cooperativity, the estimated ratios appear to concentrate at a value greater than one (as indicated by the sharp peak of the histogram in the middle panel). This shows that our IDC approach is robust to the underestimation of the number of ion channels in both zero and positively-cooperative scenarios. However, in the scenario of negative cooperativity, there is a tendency for the estimated ratios to become greater than one, which should have been smaller than one by \cref{behaviour}. It suggests that IDC might miss the detection of negatively cooperative behaviours, mainly due to the severe underestimation of $L$. This is also in line with the observations in \cite{vanegas2023}.

\begin{figure}[!t]
\centerline{\includegraphics[width=.6\textwidth]{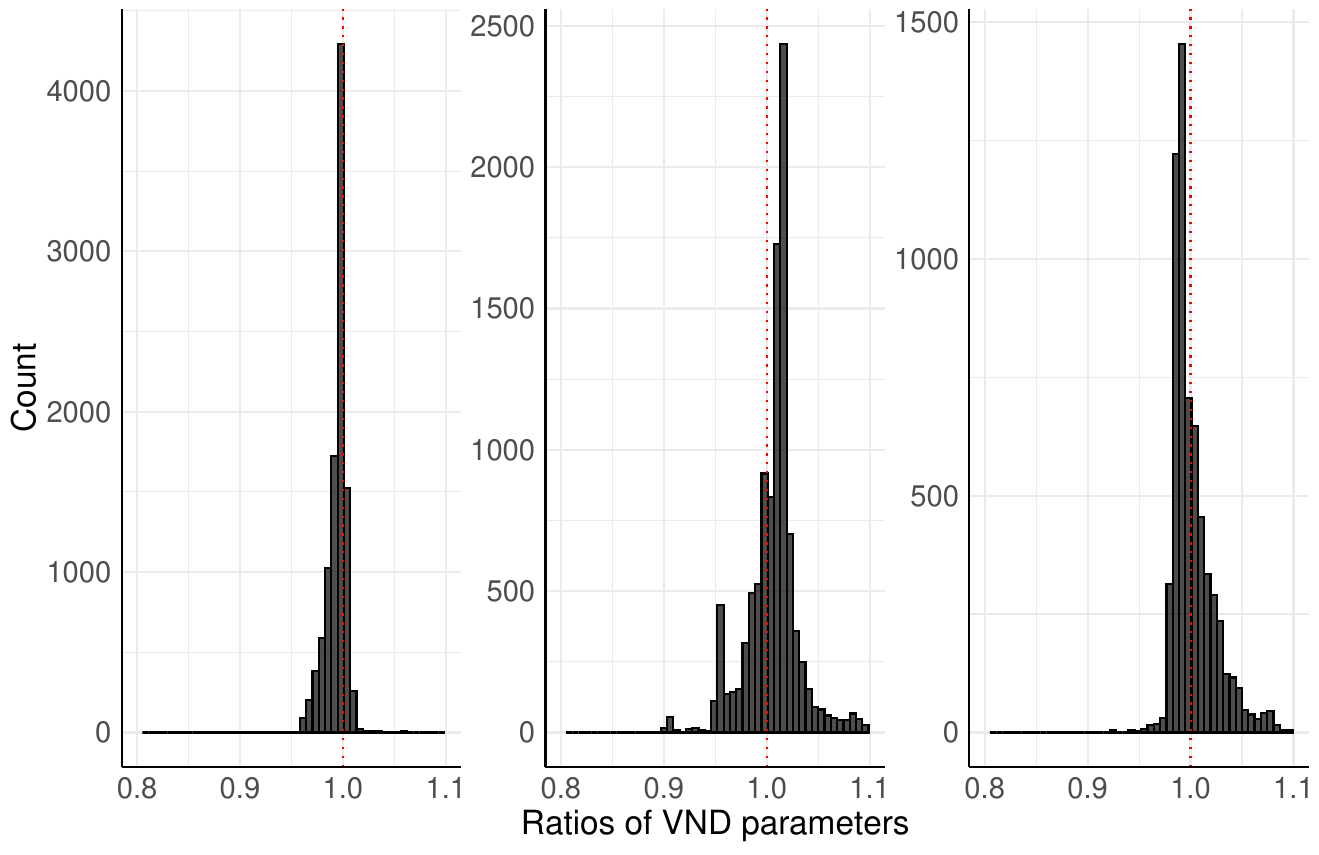}}
\caption{Histogram of the estimated values of the ratios in \cref{behaviour}, over 300 repetitions. The true ratios are equal to one in the zero cooperative scenario (left), greater than one in the positively cooperative scenario (middle) and smaller than one in the negatively cooperative scenario (right).}
\label{simulation_ratios}
\end{figure}

\section{Real data analysis}\label{s:data}

Gramicidin D is a short $\beta$-helical peptide known to form conducting channels in lipid membranes as a result of transbilayer dimerisation \cite{kondrashov2018membrane}. It can be readily reconstituted in lipid membranes and its electric response is fast and easy to detect. These characteristics make gramicidin an ideal test bed for the proposed IDC methodology (\cref{procedure}). We thus apply IDC to the gramicidin channel recordings (\cref{s:exp}) and deduce the cooperative behaviour of gramicidin ion channels. Additionally, we explore the influence of the applied voltage on the channel behaviour.

\subsection{Markovian assumption}\label{ss:markov}

As a validation of our modelling assumption, we first examine whether the joint conductance profile of multiple ion channels satisfies the Markov property (i.e., Markovian dynamics). Recall that a time-homogeneous time series $(S_k)_{k\in \mathbb{N}}$ taking values in $\{0, \ldots, L\}$ satisfies the \emph{Markov property}, if for any integer $j\ge 2$ and $s_1,\dots,s_j\in \{0,\dots,L\}$ it holds
\begin{equation}\label{markov}
    \prob{S_j=s_j\mid S_{j-1}=s_{j-1},\dots,S_1=s_1}
    =\prob{S_j=s_j\mid S_{j-1}=s_{j-1}}.
\end{equation}
That is, the time series in the current step only depends on the state in the previous time step rather than the whole past.

We utilise a commonly used chi-square-based statistical test~\cite{teste_mk} which inspects whether empirical transition matrices are statistically compatible with the relation \eqref{markov}. The implementation is provided in the {\it R} package \texttt{markovchain} (via function \texttt{verifyMarkovProperty}) on CRAN. For a gramicidin data set of length 99,999, corresponding to a recording of 5 seconds, the resulting $p$-value is $1.0$. It suggests that the gramicidin data indeed satisfies the Markov property with high statistical confidence. We further investigate the dwell times of our gramicidin data in~\cref{fig_sim3} (left panel), which appear to follow an exponential law. This is also in agreement with the Markov property. 

In addition, we fit a VND-MC to the gramicidin data and simulate the trace of the same length from the fitted VND-MC. The dwell times of the simulated trace are in~\cref{fig_sim3} (right panel). As shown, the distributions of dwell times coincide well with the gramicidin data and the simulated trace, indicating that VND-MC fits nicely with our gramicidin data.

\begin{figure}[!t]
\centerline{\includegraphics[width=.6\textwidth]{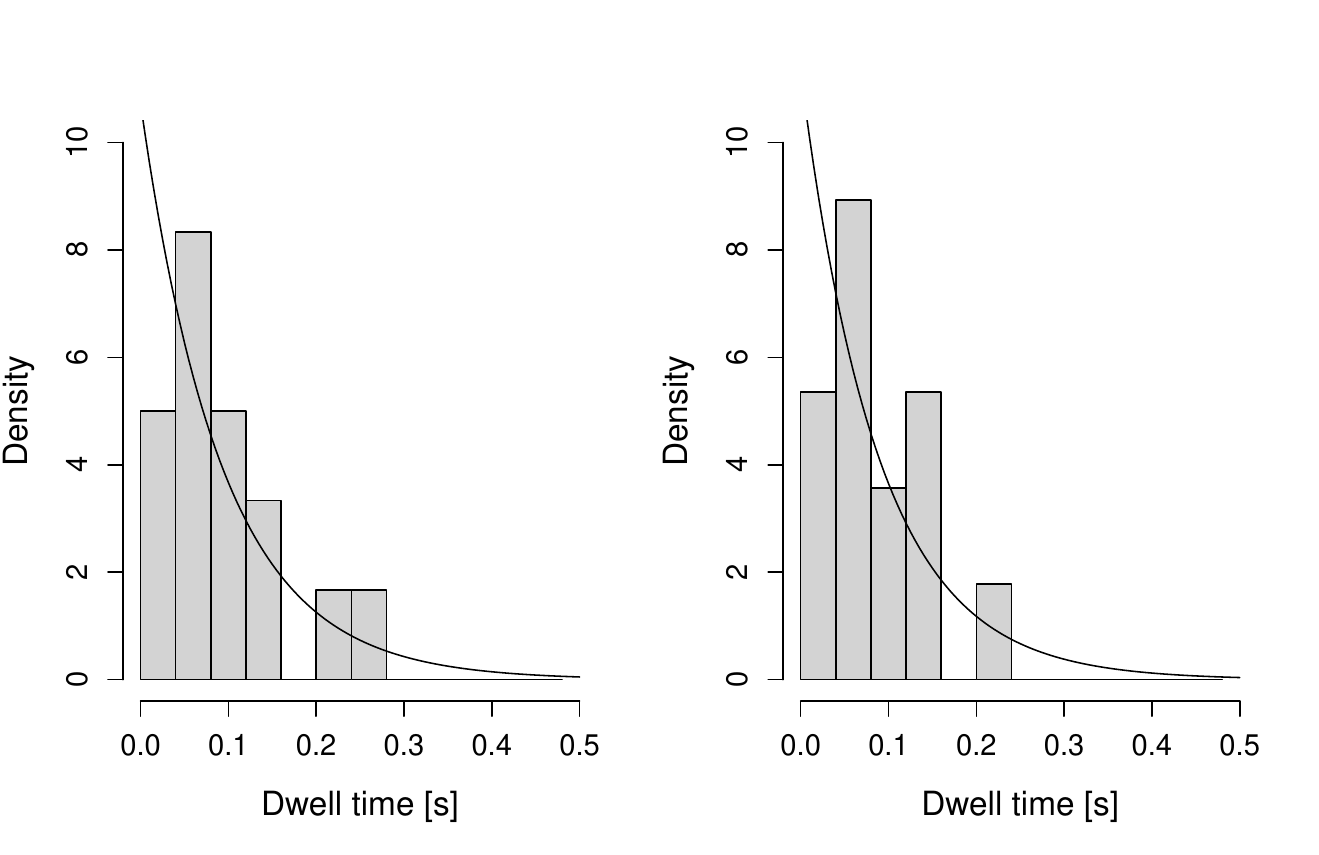}}
\caption{Histogram of dwell times of state 1 for the gramicidin data (left) and the simulated data (right). The curves show the exponential fits with the rate parameter equal to the inverse of the average of dwell times. }
\label{fig_sim3}
\end{figure}

\subsection{Independent behaviour of gramicidin channels}\label{data_analysis}

As the first step of IDC, the idealisation obtained by MUSCLE is presented in \cref{fig1}. In the second step, we determine the number of channels by plotting a histogram of idealised levels of conductance in \cref{fig3}, which exhibits four groups. Thus, we select $L = 3$ and group the idealised conductance levels into discretised levels $\{0,1,2,3\}$. In the third step, we estimate the parameter vector $\vec{\theta}$ of VND-MC by our minimum distance estimator, and obtain
\begin{equation*}
\hat{\vec{\theta}}=\bigl(0.9999560, 0.9999561 ,0.9998173,
0.9998505, 0.9998387, 0.9998944\bigr).
\end{equation*}
Based on this, the estimates for the ratios in \eqref{eq1} and \eqref{eq2} are equal to one up to an additive term with an absolute value smaller than $1.5\cdot 10^{-4}$. Note that by~\cref{consistency}  the minimum distance estimator should be close to the true parameter. Thus, the result of our analysis through IDC implies that multiple gramicidin channels behave \emph{independently} (i.e., zero cooperatively) of each other. This is in agreement with the common belief about gramicidin gating, see e.g.\ \cite{kondrashov2021peptide}.

\begin{figure}[!t]
\centerline{\includegraphics[width=.6\textwidth]{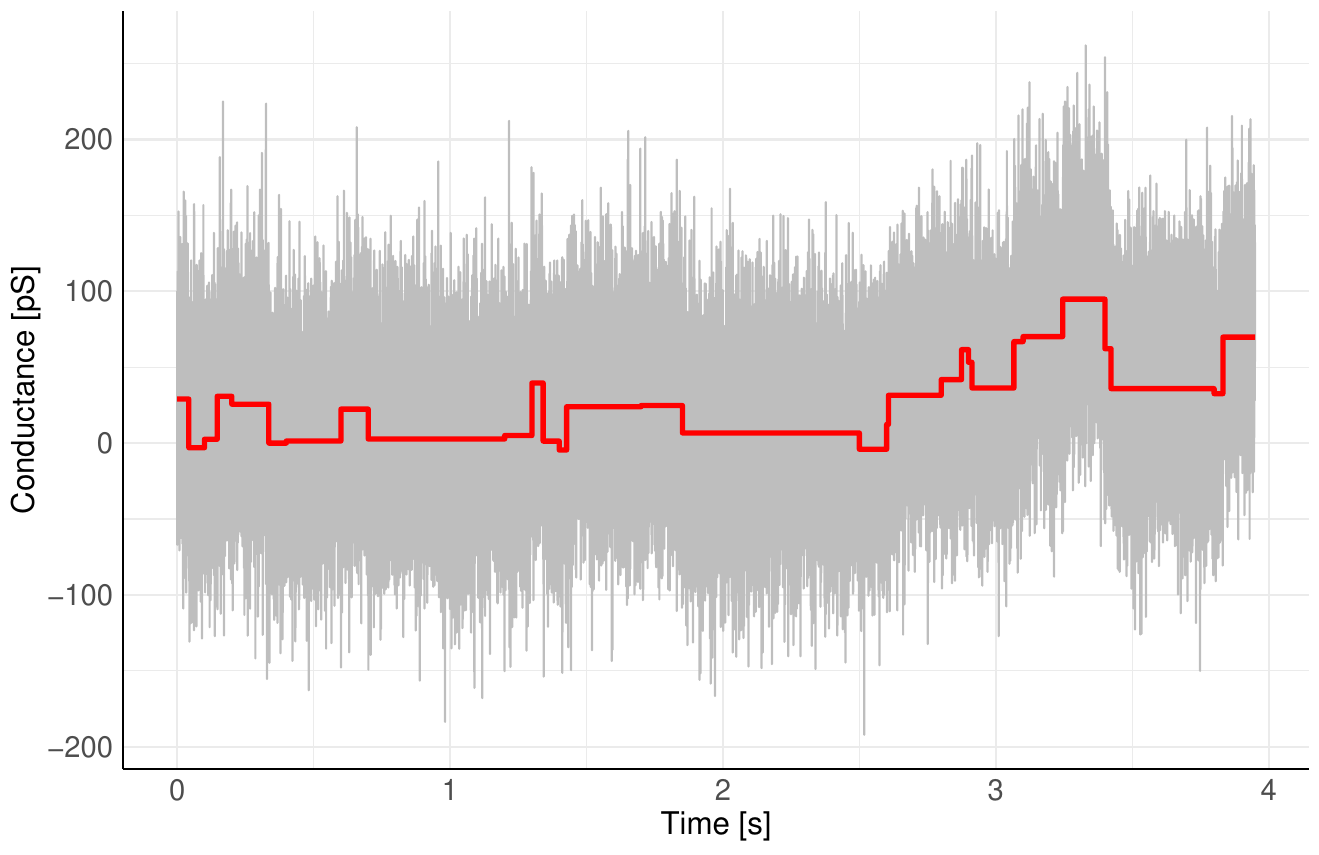  }}
\caption{Gramicidin data (grey) and idealisation (red) by MUSCLE. The estimated number of channels is 3, see~\cref{fig3}.}
\label{fig1}
\end{figure}

\begin{figure}[!t]
\centerline{\includegraphics[width=.6\textwidth]{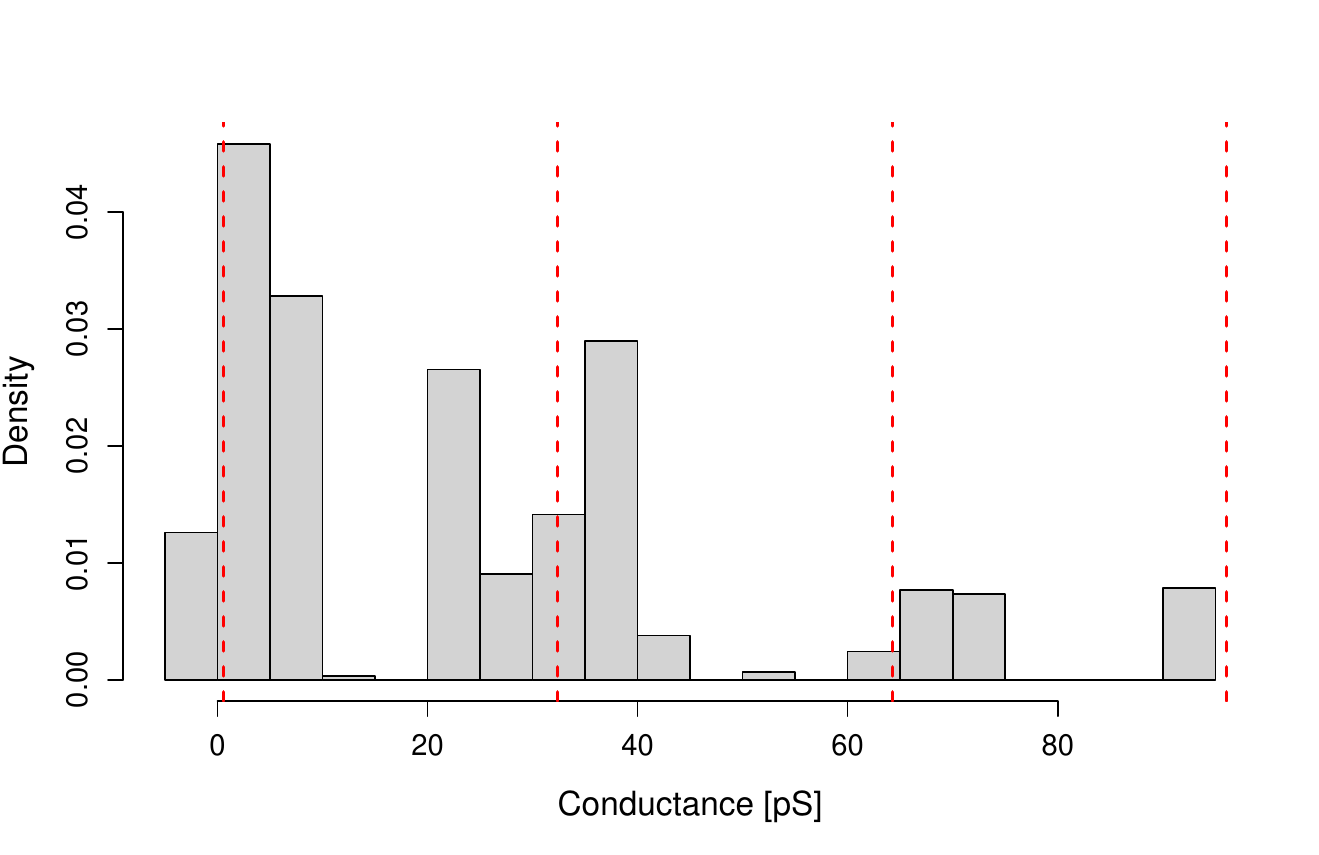}}
\caption{Histogram of the idealised conductance levels with the centres of the groups marked by vertical dashed lines.}
\label{fig3}
\end{figure}

To enhance the reliability of our finding, we repeat the above analysis with different choices for $L$. The inferred cooperative behaviour remains the same, see Appendix~\ref{analysis_different_L} for details.

\subsection{Influence of applied voltages}

Furthermore, we analyse how the voltage applied in the experiment influences the cooperative behaviour of gramicidin channels. Towards this end, we record multiple ion channels at different voltages (see \cref{s:exp}), and then analyse the channel recordings using the IDC method as in \cref{data_analysis}. The measurements and the corresponding idealisations are in \cref{fig8}. Based on the histograms of idealised conductance levels in \cref{fig9,fig10,fig11,fig12}, we set the number $L$ of channels as $2,3,5$ and $7$ for the channel recordings at the applied voltages of 50~mV, 100~mV, 150~mV and 200~mV, respectively. The minimum distance estimates of the parameter vector $\vec{\theta}$ in VND-MC are shown in \cref{table:2}.

\begin{figure}[!t]
\centerline{\includegraphics[height=10cm,width=\textwidth]{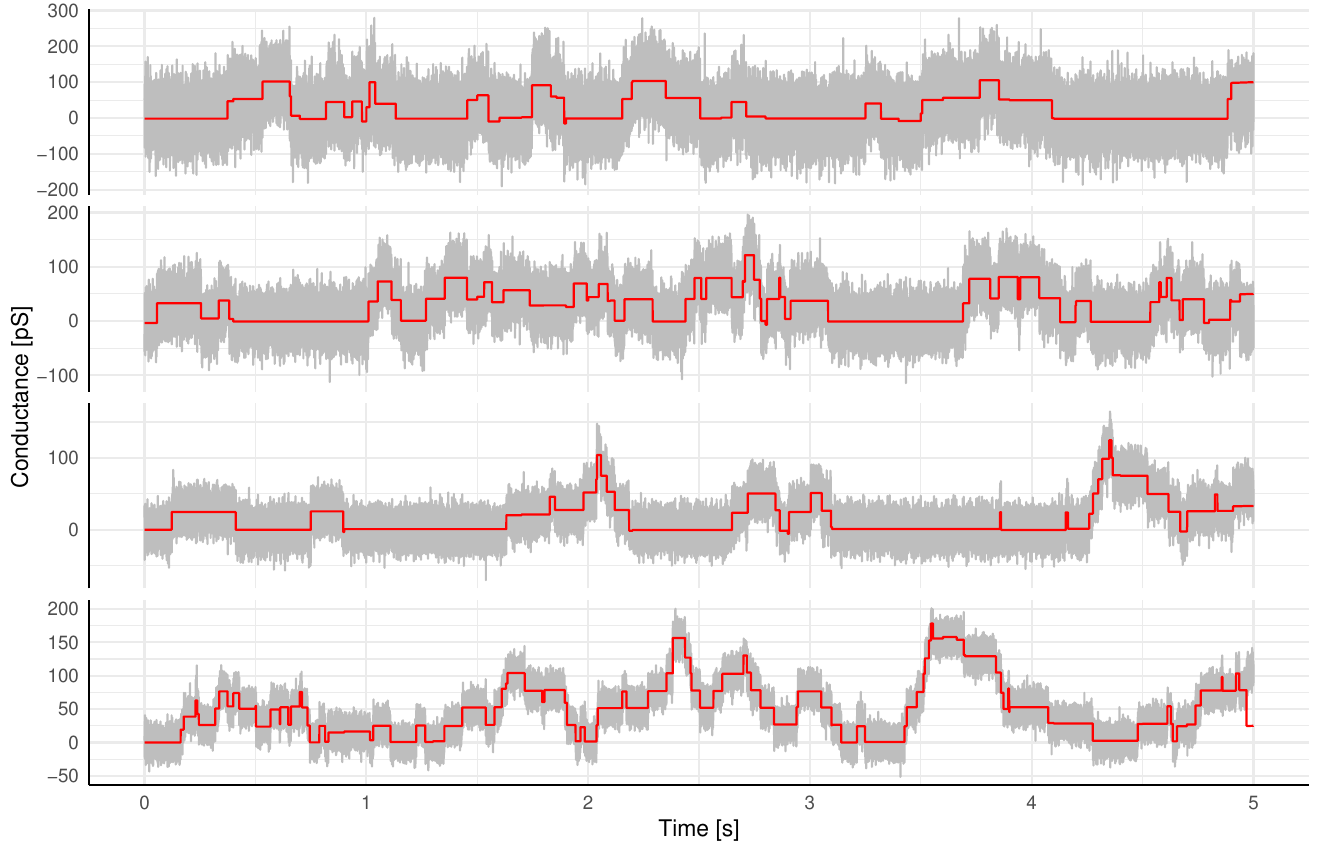}}
\caption{Gramicidin data (grey) and idealisation (red) by MUSCLE. From top to bottom, the first panel shows the scenario at the applied voltage of 50 mV, with the estimated number of channels equal to 2, see~\cref{fig9}; the second panel shows the scenario at the applied voltage of 100~mV, with the estimated number of channels equal to 3, see~\cref{fig10}; the third panel shows the scenario at the applied voltage of 150 mV, with the estimated number of channels equal to 5, see~\cref{fig11}; the last panel shows the scenario at the applied voltage of 200 mV, with the estimated number of channels equal to 7, see~\cref{fig12}.}
\label{fig8}
\end{figure}
\begin{table}[h!]
\centering
\caption{Estimation of the parameter vector in VND-MC at different applied voltages.}
\label{table:2}
\begin{tabular}{||c| c||} 
 \hline
Voltage & $\hat{\vec{\theta}}$ \\ [0.5ex] 
 \hline\hline
 50mV & (0.9999157, 0.9998241, 0.9996463, 0.9997629)  \\ 
 \hline
 100mV &  (0.9999031, 0.9998208, 0.9999376,\\
        &0.9996919, 0.9995520, 0.9995994)\\ 
 \hline
 150mV & (0.9999698, 0.9999453, 0.9999219, 0.9997571, \\&0.9990986, 0.9997111,0.9996450, 0.9998381,\\& 0.9995483, 0.9990504)\\
 \hline
 200mV & (0.9999137, 0.9999237, 0.9998968, 0.9999059,\\
    &0.9998578, 0.9998785, 0.9997587, 0.9995835,\\
    &0.9997637, 0.9997741, 0.9997079,\\ &0.9998529, 0.9999186, 0.9992131)  \\ [1ex] 
 \hline
\end{tabular}
\end{table}

\begin{figure}[!t]
\centerline{\includegraphics[width=.6\textwidth]{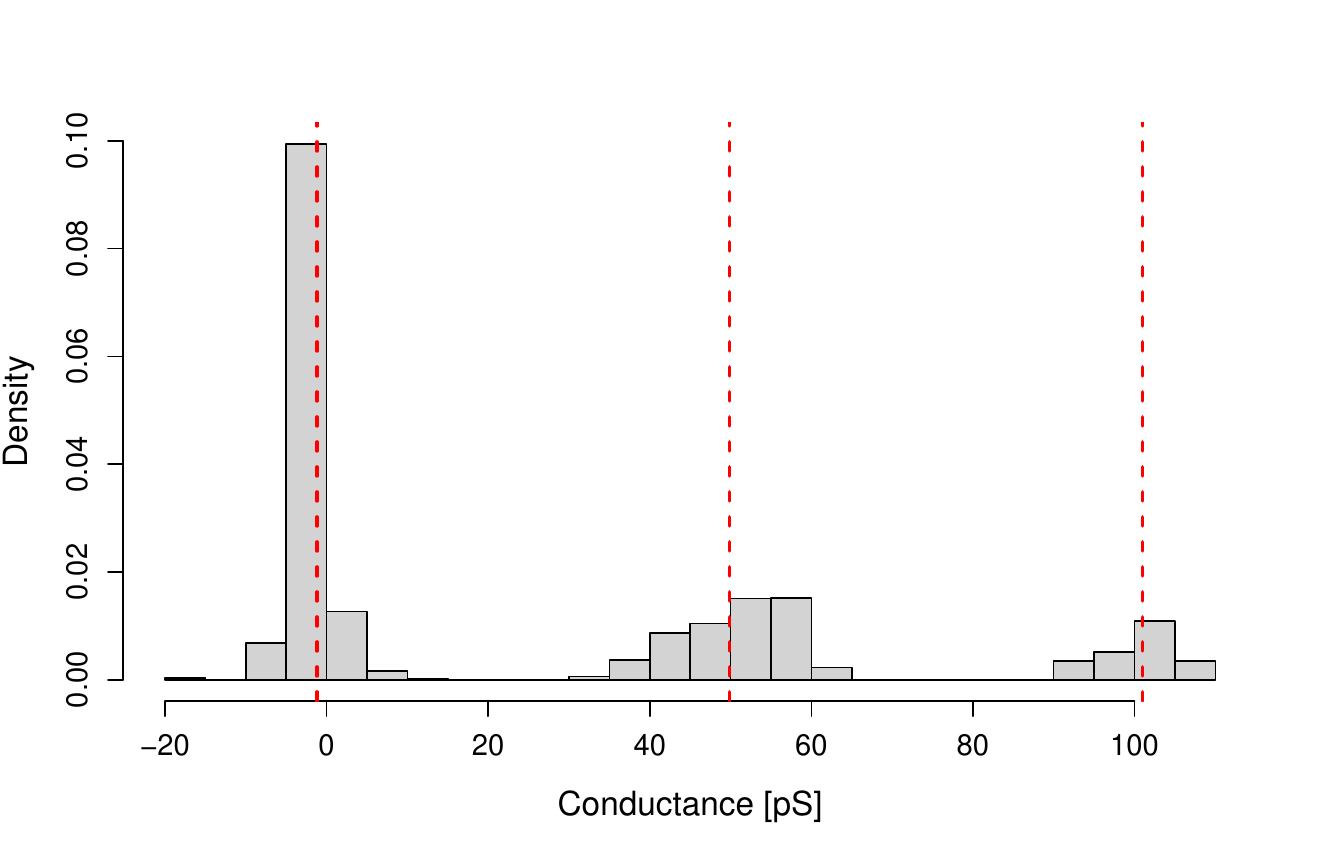}}
\caption{Histogram of the idealised conductance levels for channel recordings at 50~mV applied voltage with the centres of the groups marked by vertical dashed lines.}
\label{fig9}
\end{figure}

\begin{figure}[!t]
\centerline{\includegraphics[width=.6\textwidth]{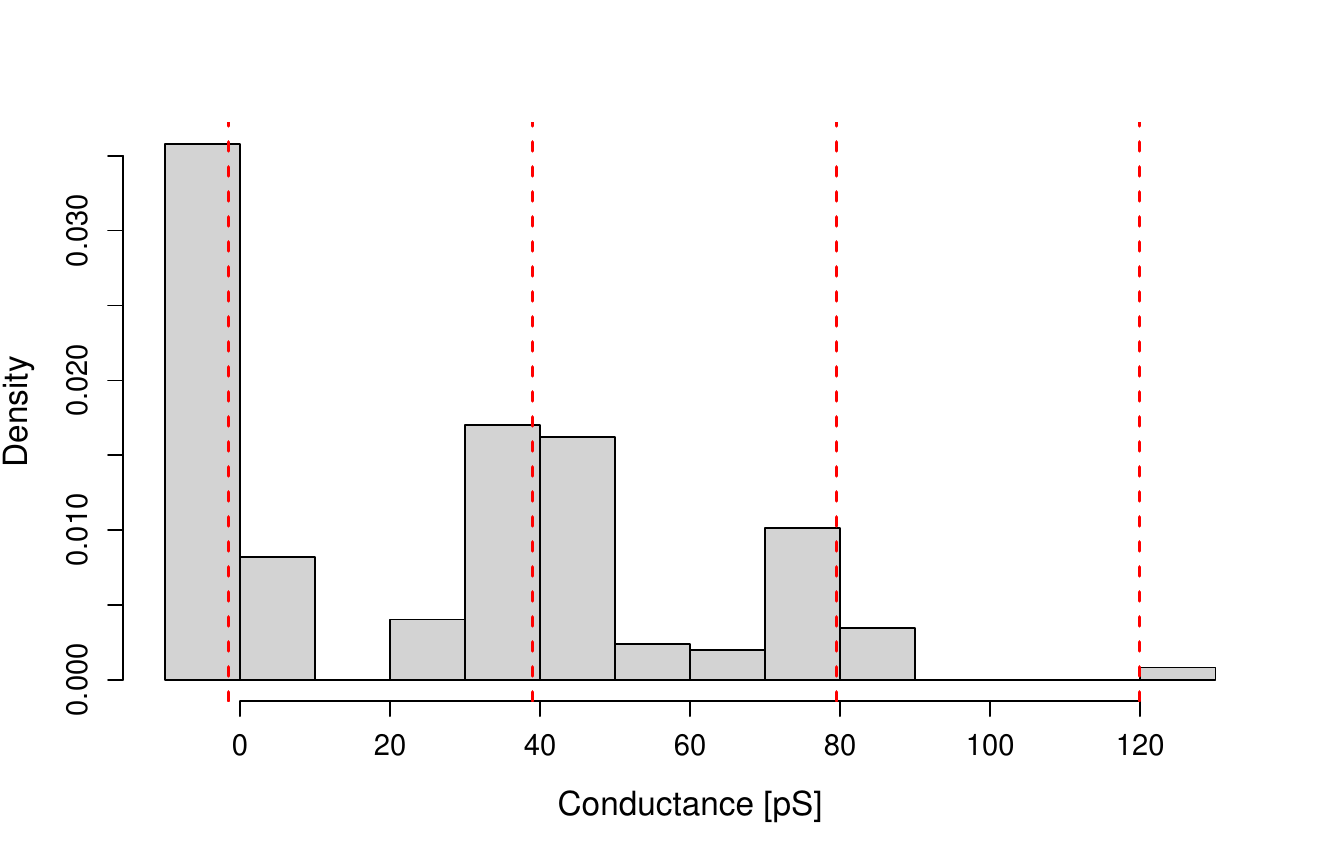}}
\caption{Histogram of the idealised conductance levels for channel recordings at 100~mV applied voltage with the centres of the groups marked by vertical dashed lines.}
\label{fig10}
\end{figure}

\begin{figure}[!t]
\centerline{\includegraphics[width=.6\textwidth]{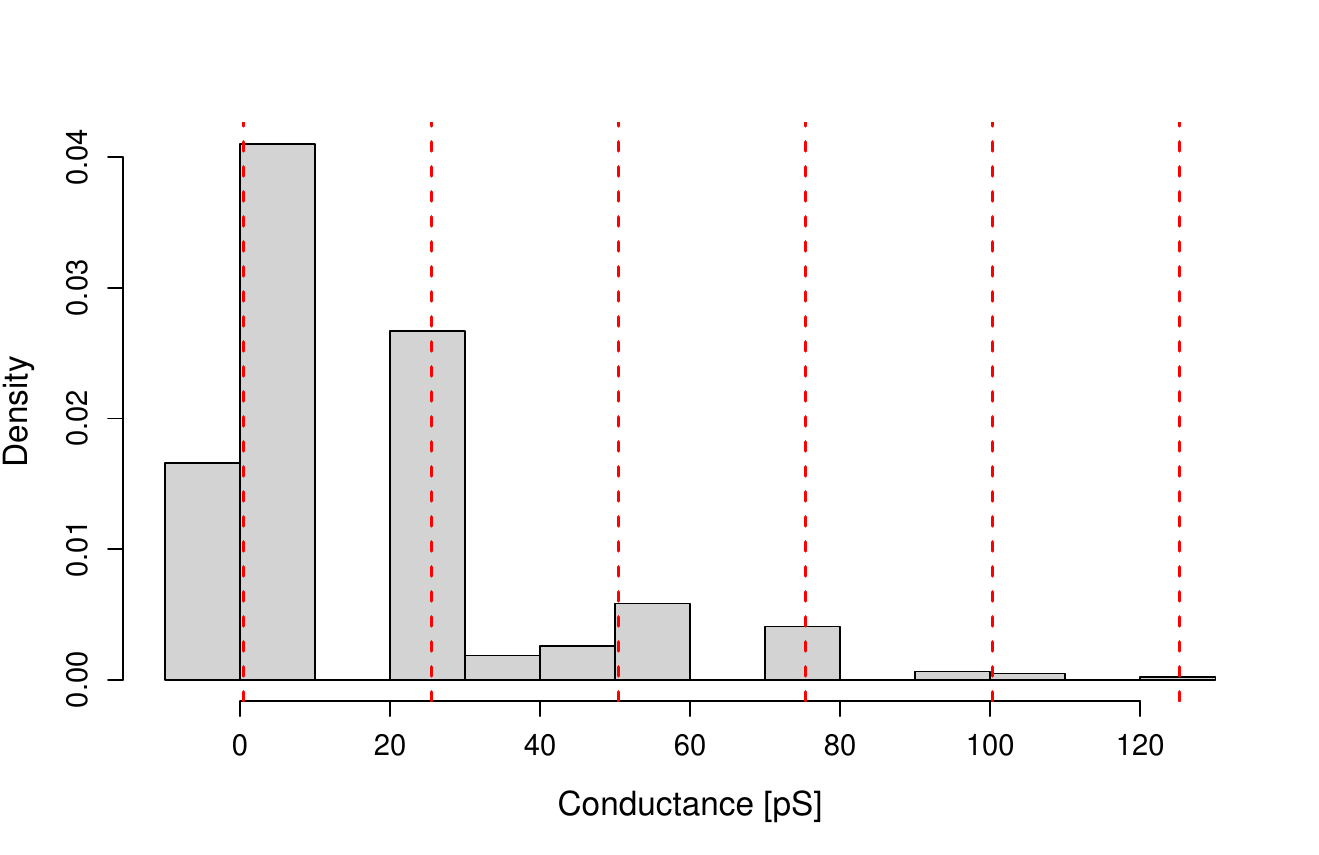}}
\caption{Histogram of the idealised conductance levels for channel recordings at 150~mV applied voltage with the centres of the groups marked by vertical dashed lines.}
\label{fig11}
\end{figure}

\begin{figure}[!t]
\centerline{\includegraphics[width=.6\textwidth]{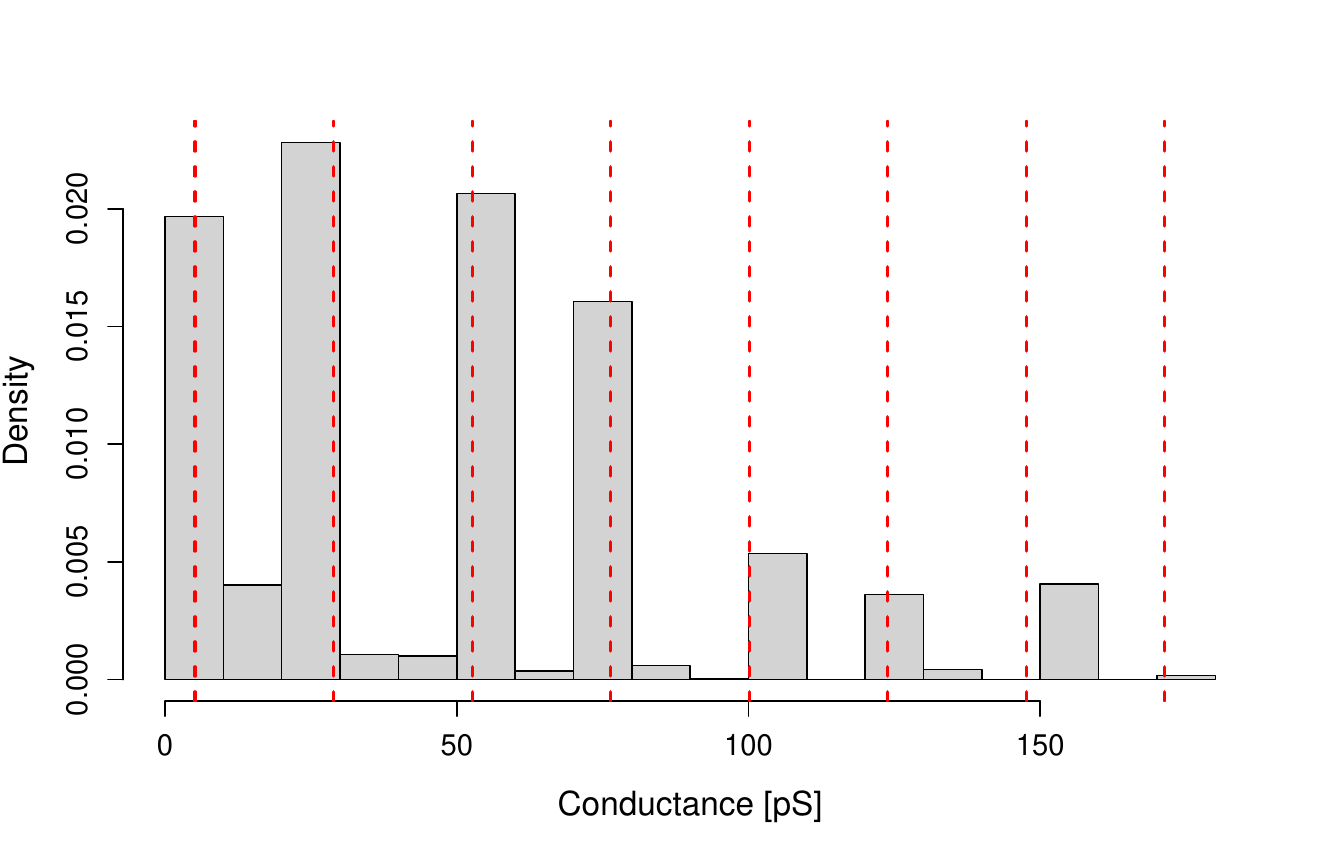}}
\caption{Histogram of the idealised conductance levels for channel recordings at 200~mV applied voltage with the centres of the groups marked by vertical dashed lines.}
\label{fig12}
\end{figure}

As the applied voltage increased, we observed certain systematic changes in the dynamics of the ion channels: the switches of states become more frequent and the number of open channels gets larger. Moreover, the signal-to-noise ratio in the channel recordings appears to increase as the applied voltage increases. Despite these changes at different applied voltages, the estimated values of the ratios in \cref{behaviour} remain very close to one. Thus, based on our data analysis, we find compelling evidence that the cooperative behaviour of gramicidin channels remains to be independent, unaffected by the varying applied voltages during the experiment.

\section{Discussion}\label{s:discuss}
We have developed a novel robust data analysis method IDC for the inference of cooperative behaviour of ion channels. It requires only  voltage-clamp current measurements of an ensemble of ion channels, which significantly reduces the experimental efforts in the study of interactions between ion channels. In comparison to existing data analysis approaches (i.e., \cite{CHUNG} and \cite{vanegas2023}), IDC incorporates the built-in low-pass filter in voltage-clamp experiments and allows a wide range of different types of noise (including outliers) as well as baseline fluctuations, which may often be encountered in ion channel recordings. This not only enhances the reliability of data analysis but also facilitates the experimental endeavour in preparing voltage-clamp current measurements. Application of IDC to other types of ion channels (e.g., KcsA) is a well-conceived direction for future research. Finally, we stress that IDC can be used to infer collective behaviour in coupled Markov models from the noisy pooled measurements, with applications to scenarios beyond multiple ion channels.

\appendices

\section{Analysis of gramicidin ion channels using different number of channels}\label{analysis_different_L}
Based on the simulation study in \cref{number_channels} we expect that our result is robust with respect to mis-specification of $L$ in the discretisation step, but to obtain further certainty, we apply IDC with different choices of $L \in\{ 2,\ldots, 6\}$. The estimated parameter vector by the minimum distance estimator is in \cref{table:1}. From this, we confirm our finding that multiple gramicidin ion channels behave independently of each other.

\begin{table}[h!]
\centering
\caption{Estimation of the parameter vector in VND-MC with different choices of $L$, the number of ion channels.}
\label{table:1}
\begin{tabular}{||c| c||} 
 \hline
 \label{table1}
$L$ & $\hat{\vec{\theta}}$ \\ [0.5ex] 
 \hline\hline
 2 & (0.9999341, 0.9999124, 0.9998522, 0.9998779)  \\ 
 \hline
 3 &  (0.9999560, 0.9999561, 0.9998173, \\&0.9998505, 0.9998387, 0.9998944)  \\
 \hline
 4 & (0.9999774, 0.9999987, 0.9999121, 0.9998176,\\&0.9997241, 0.9999660, 0.9998924, 0.9999208) \\
 \hline
 5 & (0.9999819, 0.9999995, 0.9999659, 0.9999976, \\&0.9997786, 0.9997235, 0.9999477, 0.9994520,\\&0.9999870, 0.9999690)  \\
 \hline
 6 & (0.9999734, 0.9999920, 0.9999830, 0.9999655,\\&0.9999974, 0.9997802, 0.9998935, 0.9999312,\\&0.9999654, 0.9995890, 0.9999896, 0.9999740) \\ [1ex] 
 \hline
\end{tabular}
\end{table}

\section*{Acknowledgment}
This work was supported by DFG (German Reserach Foundation) CRC~1456 {Mathematics of Experiment}, and in part by the DFG under Germany's Excellence Strategy, project EXC~2067 {Multiscale Bioimaging: from Molecular Machines to Networks of Excitable Cells} (MBExC). The authors thank Zhi Liu and Benjamin Eltzner for fruitful discussions. 


\end{document}